\newcommand{\col}[1]{{\mathrm{color}(#1)}}
\newcommand{\ket}[1]{|#1\rangle}
\newcommand{\cnot}{\text{CNOT}}
\theoremstyle{lemma}
\newtheorem{lemma}{Lemma}
\theoremstyle{cond}
\newtheorem{cond}{Condition}
\begin{document}

\title{Universal transversal gates with color codes --- a simplified approach}

\author{Aleksander Kubica} \affiliation{Institute for Quantum Information \& Matter, California Institute of Technology,  Pasadena CA 91125, USA}
\author{Michael E. Beverland} \affiliation{Institute for Quantum Information \& Matter, California Institute of Technology,  Pasadena CA 91125, USA}

\date{\today}

\begin{abstract}
We provide a simplified, yet rigorous presentation of the ideas from Bomb\'{i}n's paper \emph{Gauge Color Codes} \cite{Bombin2013}. Our presentation is self-contained, and assumes only basic concepts from quantum error correction. We provide an explicit construction of a family of color codes in arbitrary dimensions and describe some of their crucial properties. Within this framework, we explicitly show how to transversally implement the generalized phase gate $R_n = \text{diag}(1, e^{2\pi i/2^n})$, which deviates from the method in Ref.~\cite{Bombin2013}, allowing an arguably simpler proof. We describe how to implement the Hadamard gate $H$ fault-tolerantly using code switching. In three dimensions, this yields, together with the transversal $\cnot$, a fault-tolerant universal gate set $\{H,\cnot,R_3 \}$ without state-distillation.
\end{abstract}
\pacs{}
\maketitle

\section{Introduction}

To build a fully functioning quantum computer, it is necessary to encode quantum information to protect it from noise. In physical systems, one expects noise to act locally. Therefore, \emph{topological codes} \cite{Kitaev03,LevinWen,Bombin2006,BDFN10}, which naturally protect against local errors, represent our best hope for storing quantum information. However, a quantum computer must also be capable of processing this information. This motivates the search for topological codes allowing the implementation of a set of gates which (i) can operate in the presence of typical noise without corrupting the stored information, and (ii) can perform any computation on the encoded information. A theoretical framework has been developed around these ideas --- a gate which is \emph{fault-tolerant} does not propagate typical errors into uncorrectable errors~\cite{Shor1996,Preskill1998}, and therefore satisfies (i). A set of gates which is \emph{universal} can generate any unitary on the code space with arbitrary precision \cite{Kitaev1997, Nielsen2010}, and therefore satisfies (ii).

The known methods of implementing a universal, fault-tolerant gate set in topological codes typically require an enormous amount of overhead. For instance, magic state distillation \cite{Bravyi2005} with the two-dimensional toric code requires many additional ancilla qubits \cite{Fowler2012}, whereas computing by braiding non-abelian anyons \cite{Kitaev03,Nayak2008} requires additional time to move anyons around macroscopic loops \cite{Beckman2001}. These forms of overhead can make quantum processing orders of magnitude less efficient than storage alone in topological codes. This may render such approaches impractical given the experimental difficulty of scaling up quantum hardware~\cite{Fowler2012,Devoret2013,Wecker2014}. In this paper we focus on a new construction by Bomb\'in~\cite{Bombin2013}, for a universal fault-tolerant gate set with topological color codes. This seems not to involve significant additional overhead, however a lattice of at least three dimensions is required, limiting the construction's practicality for reasons of architecture. 

Following Bomb\'in's construction, we use the simplest form of fault-tolerant gate --- the \emph{transversal gate}, which is a code-space preserving unitary composed of separate unitaries applied to each physical qubit. However, according to a no-go theorem by Eastin and Knill \cite{Eastin2009}, for any code which protects against arbitrary single-qubit errors, the set of transversal gates forms a finite group and therefore cannot be universal. Some recent approaches to circumvent this no-go theorem in order to implement a universal gate set with transversal gates have been put forward~\cite{Jochym2014,Paetznick2013,Anderson2014}. 

In Ref.~\cite{Bombin2013}, Bomb\'in applies the approach of \emph{gauge fixing}~\cite{Paetznick2013,Anderson2014} to color codes in a $d$-dimensional lattice. Color codes were first introduced in two dimensions by Bomb\'{i}n and Martin-Delgado in Ref.~\cite{Bombin2006}. They are \emph{topological stabilizer codes} \cite{Gottesman1996,Calderbank1997,Kitaev03,Bravyi2013}, meaning they are defined on a lattice and have macroscopic distance together with geometrically local stabilizer generators. The main new conceptual contribution in Ref.~\cite{Bombin2013} is that gauge fixing allows one to fault-tolerantly switch between a (stabilizer) color code on a $d$-dimensional lattice, in which $\cnot$ and $R_d=\textrm{diag}\left(1,\exp(\frac{2\pi i}{2^d})\right) $ are transversal, and a different (subsystem) color code on the same lattice, in which $H$ is transversal. Critically, for $d \geq 3$, $\{ H,\cnot, R_d \}$ forms a universal gate set. To the authors' knowledge, this represents the first construction using gauge fixing to achieve a universal gate set in a topological code. 

In Ref.~\cite{Bombin2013}, Bomb\'in argues that for every $d \geq 2$, there exists a  $d$-dimensional color code with a transversal implementation of $R_d \in\mathcal{P}_d\setminus \mathcal{P}_{d-1}$, which is the main technical contribution therein. At the same time, for any topological stabilizer code, Bravyi and K\"{o}nig \cite{Bravyi2013} showed that the group of logical gates implemented transversally must be contained in $\mathcal{P}_d$, the $d^{\text{\,th}}$ level of the Clifford hierarchy\footnote{The Clifford hierarchy is defined sequentially for $j>1$ according to $\mathcal{P}_j = \{ \text{unitary }U | U P U^\dagger \in \mathcal{P}_{j-1} ~\forall P \in \mathcal{P}_{1} \}$, with $\mathcal{P}_1$ representing the Pauli group. Note that $\mathcal{P}_2$ is the well-known Clifford group.}~\cite{Gottesman1999}. These results have been extended beyond the stabilizer code setting \cite{Pastawski2014,Beverland2014}. Color codes are the only family of topological stabilizer codes currently known to saturate the Bravyi-K\"{o}nig classification in every dimension $d\geq 2$. 

In this paper, we provide a simplified yet rigorous presentation of the ideas in Ref.~\cite{Bombin2013}. The organization is as follows. First, to build some intuition, we introduce color codes in two dimensions in Section~\ref{sec:2dim}. We explain how to transversally implement the gate set $\{H,\cnot,R_2 \}$, which generates the Clifford group. Then, we describe the generalization of color codes to  $d$ dimensions in Section~\ref{sec:Ddim}. Next, in Section~\ref{sec:gates} we discuss transversal gates in those codes with an emphasis on the phase gate $R_n$, and show that in certain  $d$-dimensional color codes $R_d$ is transversal. Our construction utilizes the bipartite property of the lattice allowing for a simpler verification than in Ref.~\cite{Bombin2013}. Finally, in Section~\ref{sec:universal} we explain how to switch between color codes fault-tolerantly using the technique of gauge fixing. In particular, this allows one to implement a fault-tolerant universal gate set $\{H,\cnot,R_3 \}$ in a color code in three dimensions.

\section{Color code in two dimensions}
\label{sec:2dim}

In this section, we give an explicit construction of a stabilizer color code in two dimensions \cite{Bombin2006,Bombinbook}. We consider a $3$-valent lattice formed as a tiling of a sphere, such that faces of the lattice are colored with three colors, where neighboring faces have distinct colors. Qubits are placed at the vertices of this lattice. To define a color code on this lattice, we associate an $X$- and a $Z$-type stabilizer generator with every face. This code encodes no logical qubits. A new code, which encodes a single logical qubit, can be formed through the removal of a single physical qubit. We describe the transversal implementation of the logical gates $\overline{\cnot}$, $\overline{H}$ and $\overline{R}_2$ in the new code\footnote{We use a bar to indicate action on logical code space. The absence of a bar indicates action on physical qubits.}.

\subsection{Color code with no encoded qubits}
\label{ColorCode2dNoQubits}

Color codes in two dimensions are CSS stabilizer codes \cite{Gottesman1996,Calderbank1997}, and are therefore specified by their stabilizer group $\mathcal{S}$ generated by $X$- and $Z$-type stabilizer generators. The code space is the simultaneous $+1$ eigenspace of every stabilizer generator. In the construction, we use a two-dimensional lattice $\mathcal{L}_0^*$, obtained from a tiling of the 2-sphere, and satisfying the following requirements
\begin{itemize}
\item valence --- every vertex is 3-valent, meaning it belongs to exactly 3 edges,
\item colorability --- faces can be colored with 3 colors: red, green and blue, such that every two faces sharing an edge have different colors.
\end{itemize}
An example of such a tiling of the 2-sphere is presented in Fig.~\ref{fig:2DLatticeConstruction}(a). From these properties alone, one can show that the total number of vertices in $\mathcal{L}_0^*$ is even. To see this, note that the Euler characteristic gives $V-E+F=2$, where $V$, $E$ and $F$ denote the number of vertices, edges and faces in $\mathcal{L}^*_0$, respectively. Since every vertex is $3$-valent, we obtain $E=\frac{3}{2} V$, and then $V=2(F-2)$, which is even.

\begin{figure}[h!]
\includegraphics[width=0.45\textwidth]{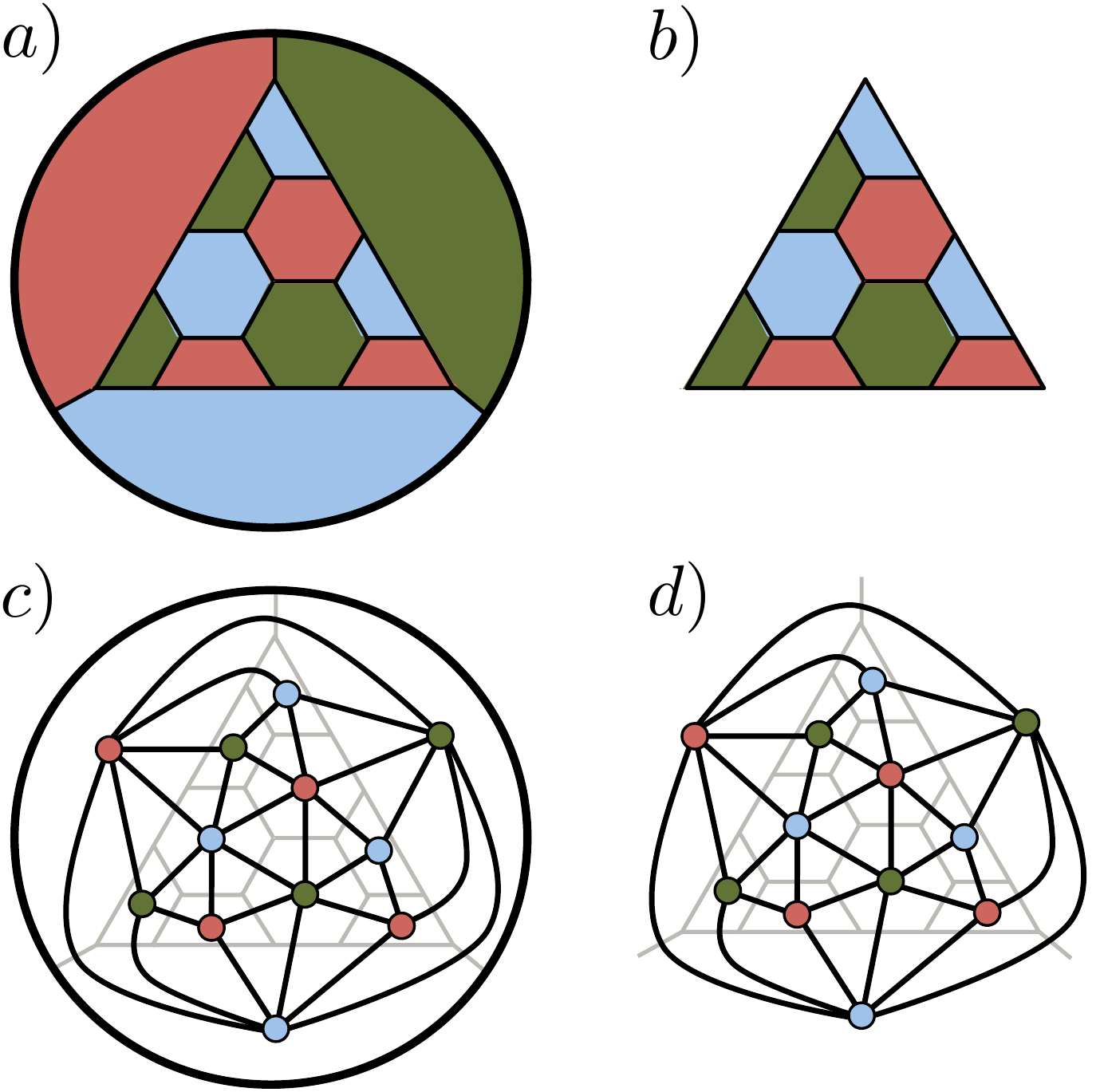}
\caption{Construction of color codes in two dimensions. In (a) and (b), qubits are placed at vertices, and $X$- and $Z$-type stabilizer generators are associated with faces. In (c) and (d) (the dual picture), qubits are placed on faces, and $X$- and $Z$-type stabilizer generators are associated with vertices. (a) Take a lattice $\mathcal{L}_0^*$, which is a tilling of the 2-sphere with 3-colorable faces and 3-valent vertices. The surrounding circle is identified with a vertex $v$. The color code on $\mathcal{L}_0^*$ encodes no logical qubits. (b) To obtain $\mathcal{L}^*$, remove from $\mathcal{L}_0^*$ the vertex $v$, together with the three edges and three faces containing it. The color code on $\mathcal{L}^*$ encodes one logical qubit. (c) (Dual) lattice $\mathcal{L}_0$ is obtained from $\mathcal{L}_0^*$ by replacing faces, edges and vertices by vertices, edges and faces, respectively. All faces are triangles, and the vertices are 3-colorable. The color code on $\mathcal{L}_0$ encodes no logical qubits. (d) Lattice $\mathcal{L}$ formed from $\mathcal{L}_0$ by removing a single face. No stabilizer generators are associated with those vertices belonging to the boundary of the removed face. The color code on $\mathcal{L}$ encodes one logical qubit.}
\label{fig:2DLatticeConstruction} 
\end{figure} 

At every vertex in $\mathcal{L}^*_0$ we place a qubit. We refer to the set of all qubits by $Q$, whereas by $\mathcal{Q}(\Pi)\subset Q$ we denote the set of vertices of a face $\Pi$. Alternatively, we can think of $\mathcal{Q}(\Pi)$ as the set of qubits belonging to $\Pi$. To define the color code, it is sufficient to specify $X$- and $Z$-type stabilizer generators. For every face $\Pi$, we define an $X$-type stabilizer generator $X(\Pi)$ to be a tensor product of Pauli $X$ operators supported on qubits $\mathcal{Q}(\Pi)$, similarly for $Z$-type generators. Then, the stabilizer group $\mathcal{S}$ is generated by 
\begin{equation}
\mathcal{S}=\langle X(\Pi),Z(\Pi)\text{, for every face $\Pi$ in $\mathcal{L}_0^*$} \rangle .
\end{equation}

To prove that this specifies a well-defined stabilizer code, we must verify that all the generators of $\mathcal{S}$ commute. It is sufficient to check that for any two faces $\Pi_1$ and $\Pi_2$ in $\mathcal{L}_0^*$, $X(\Pi_1)$ and $Z(\Pi_2)$ commute. First take the case $\Pi_1\neq\Pi_2$. If $\Pi_1$ and $\Pi_2$ share no vertices, then $X(\Pi_1)$ and $Z(\Pi_2)$ trivially commute. If they share a vertex, then by $3$-valence, they also share an edge. Moreover, due to $3$-colorability, $\Pi_1$ and $\Pi_2$ cannot share two consecutive edges, and thus their intersection has to contain an even number of vertices,
\begin{equation}
|\mathcal{Q}(\Pi_1)\cap \mathcal{Q}(\Pi_2)|\equiv 0 \mod 2.
\end{equation}
For the case $\Pi_1=\Pi_2 = \Pi$, due to $3$-colorability and $3$-valence, the number of vertices belonging to a face $\Pi$ is even,
\begin{equation}
\label{EvenVerticesOfFaces}
|\mathcal{Q}(\Pi)|\equiv 0 \mod 2.
\end{equation}
Therefore, we obtain commutation of $X(\Pi_1)$ and $Z(\Pi_2)$ for arbitrary $\Pi_1$ and $\Pi_2$.

From the construction of the lattice, one obtains that each vertex belongs to exactly three faces, colored with three different colors. Thus, one can express the set of vertices in $\mathcal{L}_0^*$ as the disjoint union\footnote{We use the \emph{disjoint union} $A \sqcup B$ in place of the union $A \cup B$ of two sets $A$ and $B$ when their instrsection is empty, $A\cap B=\emptyset$.} of vertices belonging to red faces, and similarly for green and blue~\cite{Bombin2006,Bombinbook}, namely
\begin{equation}
Q = \bigsqcup_{\Pi_R} \mathcal{Q}(\Pi_R) = \bigsqcup_{\Pi_G} \mathcal{Q}(\Pi_G) = \bigsqcup_{\Pi_B} \mathcal{Q}(\Pi_B),
\end{equation}
where $\{ \Pi_R \}$, $\{ \Pi_G \}$ and $\{ \Pi_B \}$ are the sets of all red, green and blue faces, respectively. This implies that not all the stabilizer generators we have defined are independent
\begin{eqnarray}
\prod_{\Pi_R} X(\Pi_R) &=& \prod_{\Pi_G} X(\Pi_G) = \prod_{\Pi_B} X(\Pi_B),
\label{eq:conditionsontiling1}\\
\prod_{\Pi_R} Z(\Pi_R) &=&\prod_{\Pi_G} Z(\Pi_G) = \prod_{\Pi_B} Z(\Pi_B).
\label{eq:conditionsontiling2}
\end{eqnarray}
In fact, these are the only conditions~\cite{Bombin2007,Bombinbook} which relate the stabilizer generators to one another. 

We can now verify that the color code which we have defined on the lattice $\mathcal{L}_0^*$ encodes no logical qubits. As before, using the Euler characteristic we obtain $F-2=E-V$, and from $3$-valence of vertices --- $E=\frac{3}{2} V$. We have placed physical qubits at vertices, thus $|Q|=V$. There are $2F-4$ independent stabilizer generators, since there are two stabilizer generators for every face and four conditions~(\ref{eq:conditionsontiling1})~and~(\ref{eq:conditionsontiling2}). The number of logical qubits is equal to the number of physical qubits minus the number of independent stabilizer generators, and we obtain
\begin{equation}
|Q|-(2F-4)=V-2(E-V)=0.
\end{equation}

\subsection{Color code with one logical qubit}
\label{2dOneQubit}

To obtain a color code with one encoded logical qubit, we can remove one vertex from the lattice $\mathcal{L}_0^*$, together with three edges and three faces it belongs to, obtaining a new lattice $\mathcal{L}^*$ (see Fig.~\ref{fig:2DLatticeConstruction}b). By removing one vertex, we also discard six stabilizer generators associated with the removed faces, and thus the stabilizer generators no longer have to satisfy (\ref{eq:conditionsontiling1}) and (\ref{eq:conditionsontiling2}). One can check that this new code encodes one logical qubit, since there is one qubit more than independent stabilizer generators. By removing more vertices, one could encode more logical qubits, but we will not analyze that case. Note that the total number of qubits in $\mathcal{L}^*$ is odd, $|Q|\equiv 1 \mod 2$,  which plays an important role in our considerations.

On physical grounds, it is of interest to consider stabilizer codes with stabilizer generators which are low-weight and geometrically local. In the construction we have presented, this can be achieved if each face in the lattice $\mathcal{L}^*$ is geometrically local and contains a small number of vertices, as in Fig.~\ref{fig:2DLatticeConstruction}b. It can be shown that following this construction, the resulting color code has macroscopic distance \cite{Bombin2006}, and therefore is a topological stabilizer code.

Later, when we discuss color codes in $d$ dimensions, we follow a similar construction. We briefly outline the procedure here, deferring detailed discussion to Section~\ref{sec:Ddim}. We start with a tiling of a  $d$-sphere, place qubits at vertices and define (gauge group) generators to be supported on suitable cells. Then, we remove one vertex and all the cells containing it. In particular, we discard generators supported on the removed cells. Such a code encodes only one logical qubit \cite{Bombin2007}.

\subsection{Transversal gates}

In this paper we consider stabilizer codes encoding only one logical qubit, with the stabilizer group $\mathcal{S}$. In this setting, a \emph{transversal gate} $\overline{U}$ on a single logical qubit is implemented as a tensor product of single physical qubit unitaries $U_1\otimes \ldots \otimes U_{|Q|}$, which preserves the code space. On the other hand, a logical gate on two logical qubits requires two copies of the code, in which case we say that the \emph{overall} code space is the $+1$ eigenspace of the elements in $\mathcal{S}\otimes \mathcal{S}$. A transversal gate on two logical qubits is implemented as a tensor product of two qubit gates on pairs of corresponding qubits in both copies of the code, which preserves the overall code space. Observe that transversal gates are fault-tolerant since they do not spread errors within each copy of the code.

We now show that in the two-dimensional color code described in the previous subsection, one can transversally implement the gate set $\{ \overline{H},\overline{\cnot}, \overline{R}_2 \}$, which generates the (non-universal) Clifford group. The Clifford group, combined with computational basis state preparation and measurement, can be simulated efficiently on a classical computer~\cite{Gottesman1998, Aaronson2004}. For each gate, $\overline{H}$, $\overline{\cnot}$ and $\overline{R}_2$, we verify that a particular transversal unitary implements the logical gate by showing that it has the correct action under conjugation on generators of the logical Pauli group, and that the stabilizer group is preserved\footnote{Preservation of the stabilizer group is a sufficient (but not necessary) condition that implies preservation of the code.}. 

The two-dimensional color code is a CSS stabilizer code encoding a single logical qubit with logical Pauli operators $\overline{X} = X(Q)$ and $\overline{Z} = Z(Q)$. In addition it is a \emph{self-dual CSS stabilizer code} --- a code with the same support for $X$- and $Z$-type stabilizer group elements (for each face, there is an $X$- and a $Z$-type generator). This implies that the logical Hadamard gate can be implemented transversally, as under conjugation by $H(Q)$, $\overline{X} \mapsto H(Q) X(Q)H(Q)^\dagger = \overline{Z}$ and similarly $\overline{Z} \mapsto \overline{X}$. Moreover, $X(\Pi)\mapsto Z(\Pi)$, $Z(\Pi)\mapsto X(\Pi)$, and thus $\mathcal{S}$ is preserved.

The logical gate $\overline{\cnot}$ can be implemented transversally between two identical copies of this color code by applying a physical gate $\cnot$ to every pair of corresponding qubits in the first and the second copy. This can be verified by checking that under conjugation by $\overline{\cnot}$, $\overline{X}\,\overline{I} \mapsto \overline{X}\,\overline{X}$, $\overline{I}\,\overline{X} \mapsto \overline{I}\,\overline{X}$, $\overline{Z}\,\overline{I} \mapsto \overline{Z}\,\overline{I}$, $\overline{I}\,\overline{Z} \mapsto \overline{Z}\,\overline{Z}$ and $\mathcal{S}\otimes\mathcal{S}$ is preserved\footnote{Notice that generators of $\mathcal{S}\otimes\mathcal{S}$ are mapped under conjugation to a different generators, namely $X(\Pi)\otimes I(\Pi) \mapsto X(\Pi)\otimes X(\Pi)$, $Z(\Pi)\otimes I(\Pi) \mapsto I(\Pi)\otimes Z(\Pi)$, $I(\Pi)\otimes X(\Pi) \mapsto I(\Pi)\otimes X(\Pi)$ and $I(\Pi)\otimes Z(\Pi) \mapsto Z(\Pi)\otimes Z(\Pi)$.}.

To show that $\overline{R}_2$ can be implemented transversally, we use the fact that the set of vertices in $\mathcal{L}^*$ is bipartite (see Fig.~\ref{fig:BipartiteGraph}(a)). In other words, $Q$ can be split into two subsets, $T$ and $T^c:=Q\setminus T$, such that vertices in $T$ are connected only to vertices in $T^c$ and vice versa. To prove this, first note that every face in $\mathcal{L}_0^*$ has an even number of edges. Moreover, every cycle in $\mathcal{L}_0^*$ (as a tiling of the 2-sphere) is contractible. This implies that every cycle in $\mathcal{L}_0^*$ is a boundary of faces and is therefore even. Using the following lemma
\begin{lemma}[Graph Bipartition]
A graph containing only even cycles is bipartite~\cite{Wilson1996}.
\label{lemma:Bipartition}
\end{lemma}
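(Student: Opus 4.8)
The plan is to prove the standard fact that a graph with no odd cycle is bipartite, of which this lemma is exactly the nontrivial implication. First I would reduce to the connected case, since a graph is bipartite if and only if each of its connected components is, so I may assume the graph $G$ under consideration is connected. Then, fixing an arbitrary base vertex $v_0$, I would define the candidate bipartition by parity of graph distance: let $A$ be the set of vertices at even distance from $v_0$, and $B = V(G)\setminus A$ the set at odd distance. It then remains only to verify that no edge of $G$ has both endpoints in $A$ or both in $B$; granting this, $(A,B)$ is a valid bipartition.

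For that verification I would argue by contradiction. Suppose an edge $\{u,w\}$ joined two vertices whose distances from $v_0$ have the same parity. Splicing together a shortest path from $v_0$ to $u$, the edge $\{u,w\}$, and a shortest path from $w$ to $v_0$ yields a closed walk of length $d(v_0,u) + 1 + d(v_0,w)$, which is odd. I would then invoke, with a short inductive proof on the length, the elementary fact that every closed walk of odd length contains an odd cycle: if the walk is already a simple cycle there is nothing to prove; otherwise it revisits some vertex, and splitting it at that vertex produces two shorter closed walks whose lengths sum to an odd number, so at least one of them is odd and the inductive hypothesis applies. The resulting odd cycle contradicts the hypothesis that $G$ has only even cycles, completing the argument.

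I expect the only genuine obstacle to be this last ingredient --- the extraction of an odd cycle from an odd closed walk --- while the distance-parity coloring and the attendant parity arithmetic are routine bookkeeping. For completeness I would also record the trivial converse, that a bipartite graph has only even cycles, since traversing any cycle alternates between the two parts and hence closes up only after an even number of steps; this identifies the present lemma as the substantive half of the characterization, and it is precisely what is needed above to split $Q$ into the sets $T$ and $T^c$ on which the transversal implementation of $\overline{R}_2$ will rest.
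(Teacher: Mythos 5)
Your argument is correct and complete; it is the standard distance-parity proof (with the extraction of an odd cycle from an odd closed walk as the one nontrivial step), which is exactly the proof found in the reference the paper cites. The paper itself gives no proof of this lemma --- it is stated as a known fact with a citation to Wilson --- so there is nothing in the text to compare against, and your writeup would serve as a self-contained substitute for that citation.
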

\noindent we see that $\mathcal{L}_0^*$ must be bipartite, and so is the lattice $\mathcal{L}^*$ due to its construction from $\mathcal{L}_0^*$.

Now, we can show that $R= R^k_2 (T) R^{-k}_2 (T^{c})$ implements $\overline{R}_2$, for some choice of integer $k$. We use the relations $R_2 X R_2^{\dagger} = i X Z$ and $R_2 Z R_2^{\dagger} = Z$. Since $|Q|\equiv 1 \mod 2$, then $|T|-|T^{c}| = 2|T|-|Q|\equiv \pm 1 \mod 4$, and picking $k= |T|-|T^{c}| \mod 4$ ensures that $k(|T|-|T^{c}|) \equiv 1 \mod 4$. With this choice of $k$, the action by conjugation of $R= R^k_2 (T) R^{-k}_2 (T^{c})$ on the logical $\overline{X}$ and $\overline{Z}$ is 
\begin{eqnarray}
R \overline{X}\, R^\dag &=& i^{k(|T|-|T^{c}|)} \overline{X}\,\overline{Z}=i \overline{X}\,\overline{Z},\\
R \overline{Z}\, R^\dag &=& \overline{Z}.
\end{eqnarray}

Furthermore, as every face $\Pi$ in the lattice $\mathcal{L}^*$ has an equal number of vertices in $T$ and $T^c$, under the action of $R$ the stabilizer generators $X(\Pi)$ and $Z(\Pi)$ become: 
\begin{eqnarray}
R X(\Pi) R^\dagger &=& i^{k(|T\cap \Pi|-|T^c\cap \Pi|)} X(\Pi) Z(\Pi) \\
&=& X(\Pi) Z(\Pi) \in \mathcal{S},\\ 
R Z(\Pi) R^\dagger &=& Z(\Pi),
\end{eqnarray}
implying that the stabilizer group $\mathcal{S}$ is preserved. This completes the verification that $R$ implements $\overline{R}_2$.

\begin{figure}[h!]
\includegraphics[width=0.48\textwidth]{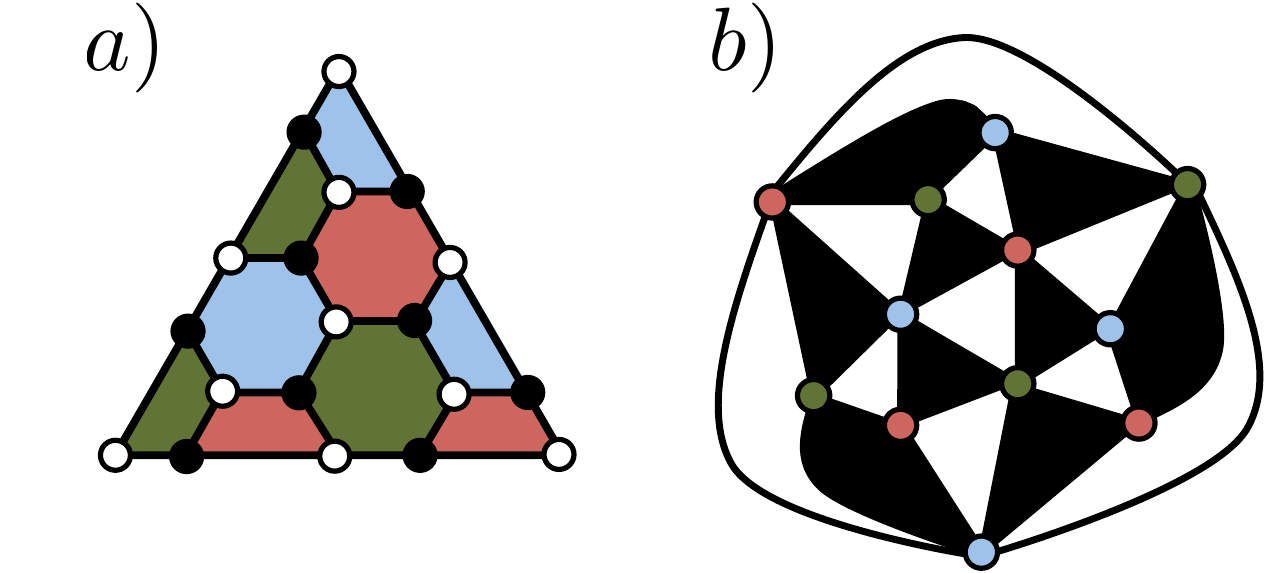}
\caption{(a) The set of vertices of $\mathcal{L}^*$, the lattice used to define the color code, is bipartite --- it can be split into two subsets: $T$ (hollow circles), and its compliment $T^{c}$ (filled circles). Vertices in $T$ are only connected to vertices in $T^c$ and vice versa. The logical gate $\overline{R}_2$ can implemented by applying $R_2^k$ to qubits in $T$, and $R_2^{-k}$ to qubits in $T^{c}$, where $k \equiv  |T|-|T^{c}| \mod 4$. (b) The dual lattice $\mathcal{L}$. Faces are bipartite.}
\label{fig:BipartiteGraph} 
\end{figure} 

\subsection{Dual lattice picture}
\label{DualLattice2d}

We can alternatively express the construction of color codes in the dual lattice picture, which we use extensively in the later discussion for $d>2$ dimensions. We use a two-dimensional (dual) lattice $\mathcal{L}_0$, obtained from a tiling of the 2-sphere, and satisfying the following requirements
\begin{itemize}
\item all faces are triangles,
\item vertices are 3-colorable, meaning two vertices belonging to the same edge are colored with different colors.
\end{itemize}
See Fig.~\ref{fig:2DLatticeConstruction}(c) for a simple example. Note that these conditions are equivalent to the conditions of 3-valence of vertices and 3-colorability of faces required for the tiling $\mathcal{L}_0^*$ of the 2-sphere, where lattices $\mathcal{L}_0^*$ and $\mathcal{L}_0$ are dual to one another.

A qubit is placed on every face of $\mathcal{L}_0$, and an $X$- and a $Z$-type stabilizer generator is associated with every vertex, meaning they are supported on qubits corresponding to faces containing that vertex. The resulting color code is exactly the same as that described in Section~\ref{ColorCode2dNoQubits}, and therefore has zero logical qubits. To encode a single logical qubit, one should remove a face from $\mathcal{L}_0$, together with stabilizer generators associated with the vertices belonging to the removed face, see Fig.~\ref{fig:2DLatticeConstruction}(d).

The bipartition of vertices in $\mathcal{L}^*$ corresponds to a bipartition of faces in $\mathcal{L}$, meaning that faces can be split into two sets, $T$ and its compliment $T^c$, such that faces in $T$ share an edge only with faces in $T^c$ and vice-versa. See Fig.~\ref{fig:BipartiteGraph}(b).

\section{Color code in higher dimensions}
\label{sec:Ddim}
Here we present a construction of color codes on $d$-dimensional lattices. In higher dimensions it is easier to describe the construction in the language of the dual lattice. The majority of this section is devoted to defining dual lattices satisfying certain conditions and analyzing their properties. The discussion is a generalization of that already presented for two dimensions. The basic idea of how to construct the dual lattice $\mathcal{L}$ is to first tile a $d$-sphere with $d$-simplices to form a lattice $\mathcal{L}_0$. We require that every vertex in $\mathcal{L}_0$ can be assigned one of $d+1$ distinct colors and two vertices belonging to the same edge have different colors. The lattice $\mathcal{L}$, used to define the color code, is formed by removing one $d$-simplex from $\mathcal{L}_0$.

\subsection{Simplicial complexes and colorability}

A  $d$-simplex $\delta$ is a  $d$-dimensional polytope which is a convex hull of its $d+1$ affinely independent vertices $v_0,v_1,\ldots, v_d$, namely 
\begin{equation}
\delta=\left\{\sum_{i=0}^d t_i v_i \right|\left. 0\leq t_i  \wedge\sum_{i=0}^d t_i = 1\right\}.
\end{equation}
In particular, $0$-simplices are vertices, $1$-simplices are edges, $2$-simplices are triangles, $3$-simplices are tetrahedra and so on.

A convex hull of a subset of vertices of size $k+1\leq d+1$ is a $k$-simplex $\sigma$, which we call a $k$-face of $\delta$, and $\sigma\subset\delta$. For example, the faces of a $3$-simplex (a tetrahedron) are: four $0$-simpices, six $1$-simplices, four $2$-simplices and a single $3$-simplex. More generally, $\delta$ contains $d+1\choose k+1$ $k$-faces, since every $k$-face is uniquely determined by the choice of $k+1$ vertices spanning it. By $\Delta_k(\delta)$ we call the set of all $k$-faces of $\delta$, namely 
\begin{equation}
\Delta_k(\delta) = \{ \sigma\subset\delta | \sigma\textrm{ is a $k$-simplex}\}.
\end{equation}

Instead of having only one simplex, we can consider a collection of them. Moreover, we can create new objects, called simplicial complexes~\cite{Hatcher2002}, by gluing simplices along their proper faces of matching dimension. We restrict ourselves to simplicial complexes containing finitely many simplices. We will define a $d$-dimensional color code on a lattice $\mathcal{L}$ obtained by gluing together $d$-simplices. The technical name for such a lattice is a homogeneous simplicial $d$-complex.

Although $\mathcal{L}$ is formally a collection of simplices, by the same symbol we also denote the union of these simplices as a topological space. Notice that $\mathcal{L}$ is a manifold with a boundary, which we can think of as being embedded in real space. We denote by $\partial\mathcal{L}$ the set of simplices belonging to the boundary of $\mathcal{L}$, where the boundary of $\mathcal{L}$ is the set of points in the closure of $\mathcal{L}$ not belonging to the interior of $\mathcal{L}$. Moreover, by $\Delta'_k(\mathcal{L})$ we understand a set of all $k$-simplices belonging to $\mathcal{L}\setminus\partial\mathcal{L}$. Note that $\Delta'_d(\mathcal{L})=\Delta_d(\mathcal{L})$.

We say that a simplicial  $d$-complex $\mathcal{L}$ is $(d+1)$-colorable if we can introduce a function 
\begin{equation}
\textrm{color}:\Delta_0 (\mathcal{L})\rightarrow\mathbb{Z}_{d+1},
\end{equation}
where $\mathbb{Z}_{d+1}=\{ 0,1,\ldots, d \}$ is a set of $d+1$ colors, and two vertices belonging to the same edge have different colors. Moreover, by $\col{\delta}$ we understand the set of colors assigned to all the vertices of a simplex $\delta$, namely 
\begin{equation}
\col{\delta}=\bigsqcup_{v\in\Delta_0 (\delta)}\col{v}.
\end{equation}

An example of a $3$-colorable, homogeneous, simplicial $2$-complex is the lattice $\mathcal{L}$ shown in Fig.~\ref{fig:2DLatticeConstruction}(d). Note in particular that it is composed of nineteen $2$-simplices (triangles). The exact shape of objects in $\mathcal{L}$ is not important due to its topological nature --- the lattice is not rigid and can be smoothly deformed. In this example, $\Delta'_0(\mathcal{L})$ consists of the set of $9$ vertices (the three vertices in the boundary are excluded). $\Delta'_1(\mathcal{L})$ is the set of $27$ edges, (the three edges in the boundary are excluded). $\Delta'_2(\mathcal{L})$ is the set of all $19$ triangular faces.

\subsection{Definition of color code}

Here we define color codes on a $d$-dimensional lattice $\mathcal{L}$, which must satisfy the following conditions
\begin{cond}
$\mathcal{L}$ is a homogeneous simplicial  $d$-complex obtained as a triangulation of the interior of a  $d$-simplex.
\label{cond1}
\end{cond}
\begin{cond}
$\mathcal{L}$ is $(d+1)$-colorable.
\label{cond2}
\end{cond}
\noindent One can obtain such a lattice $\mathcal{L}$ from any $(d+1)$-colorable tiling of the $d$-sphere with $d$-simplices, followed by the removal of one $d$-simplex. In $d=2$ dimensions, this is precisely the procedure described in Section~\ref{DualLattice2d}. An explicit construction of a family of lattices satisfying these conditions is outlined in Appendix~\ref{LatticeConstruction}.

Qubits are placed on each and every  $d$-simplex of $\mathcal{L}$, and thus the set of all qubits $Q$ is equal to $\Delta_d(\mathcal{L})$. This motivates the next definition, namely for a simplex $\delta\subset\mathcal{L}\setminus\partial\mathcal{L}$ we define 
\begin{equation}
\mathcal{Q}(\delta)=\{ \sigma\in\Delta_d(\mathcal{L}) | \sigma\supset\delta\}.
\end{equation}
In other words, $\mathcal{Q}(\delta)$ can be thought of as the set of qubits placed on  $d$-simplices containing $\delta$. We say that qubits $\mathcal{Q}(\delta)$ are supported on $\delta$. By saying that an operator is supported on $\delta$ we mean that it is supported on the set $\mathcal{Q}(\delta)$, for example $X(\delta):=X(\mathcal{Q}(\delta))$.

A color code is a CSS subsystem code~\cite{Poulin2005,Bacon2006}. Recall that a CSS subsystem code is specified by its gauge group $\mathcal{G}$. Each $X$-type gauge group generator $X(G^x)$ consists of Pauli $X$ operators applied to qubits $G^x$; similarly for $Z$-type generators. The stabilizer group $\mathcal{S} \subset \mathcal{G}$ is the group generated by all Pauli operators $X(S^x)$ and $Z(S^z)$ contained in $\mathcal{G}$, which commute with every element of $\mathcal{G}$. Note that $-I\not\in\mathcal{S}$. The codewords are $+1$ eigenvectors of all elements of $\mathcal{S}$.

We define a $d$-dimensional color code~\cite{Bombin2013} on the lattice $\mathcal{L}$, where $d=\dim\mathcal{L}$, as the CSS subsystem code with of $X$- and $Z$-type gauge generators supported on $x$- and $z$-simplices in $\mathcal{L}$,
\begin{equation}
\label{gaugegeneratordefinition}
\mathcal{G} = \langle X(\delta), Z(\sigma)|\forall \delta\in\Delta'_x(\mathcal{L}),\sigma\in\Delta'_z(\mathcal{L}) \rangle,
\end{equation}
where $x+z\leq d-2$. The $X$- and $Z$-type generators of the stabilizer group $\mathcal{S}$ are supported on $(d-z-2)$- and $(d-x-2)$-simplices, namely
\begin{equation}
\mathcal{S}\! =\! \langle X(\delta), Z(\sigma)|\forall\delta\!\in\Delta'_{d - z - 2}(\mathcal{L}), \sigma\!\in\Delta'_{d - x - 2}(\mathcal{L}) \rangle.\label{stabilizergeneratordefinition}
\end{equation}

We refer to this code by $CC_{\mathcal{L}} (x,z)$. When context makes the lattice unambiguous, we sometimes use $CC_d (x,z)$ to emphasize the dimensionality of the lattice, $\dim\mathcal{L}=d$. Note that the generators of the gauge and stabilizer groups are supported on simplices which do not belong to $\partial\mathcal{L}$, the boundary of the lattice $\mathcal{L}$.

To illustrate the language introduced in this section, we revisit the two-dimensional color code described in Sections~\ref{2dOneQubit} and~\ref{DualLattice2d}. We begin with the lattice $\mathcal{L}$ shown in Fig.~\ref{fig:2DLatticeConstruction}d. Qubits are placed on $2$-simplices (triangular faces). Since $x+z\leq  \dim\mathcal{L} - 2 = 0$, there is only one color code on the two-dimensional lattice $\mathcal{L}$, namely $CC_{\mathcal{L}}(0,0)$, which is a stabilizer code. Stabilizer generators are associated with $0$-simplices (vertices). Note that no stabilizer generators are assigned to the three vertices belonging to the boundary of $\mathcal{L}$.

\subsection{Properties of the lattice}

Here we present some properties of any $(d+1)$-colorable homogeneous simplicial $d$-complex $\mathcal{L}$. We use these properties to verify that $CC_{\mathcal{L}}(x,z)$ is a valid code, and later that there is a transversal implementation of $\overline{R}_n$. We start with the following two lemmas

\begin{lemma}[Intersection]
Let $\delta$ and $\sigma$ be two simplices in $\mathcal{L}\setminus\partial\mathcal{L}$. If $\mathcal{Q}(\delta)\cap \mathcal{Q}(\sigma)\neq \emptyset$, then $\mathcal{Q}(\delta)\cap \mathcal{Q}(\sigma)=\mathcal{Q}(\tau)$, where $\tau$ is the smallest simplex containing both $\delta$ and $\sigma$.
\label{lemma:intersection}
\end{lemma}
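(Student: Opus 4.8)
The plan is to reduce the statement to a purely combinatorial fact about the vertex sets of simplices in a simplicial complex. Recall that a $d$-simplex $\sigma \in \Delta_d(\mathcal{L})$ is uniquely determined by its vertex set $\Delta_0(\sigma)$, which has size $d+1$; and $\sigma \supset \delta$ if and only if $\Delta_0(\delta) \subseteq \Delta_0(\sigma)$. So $\mathcal{Q}(\delta)$ is the set of all $d$-simplices whose vertex set contains $\Delta_0(\delta)$, and likewise for $\mathcal{Q}(\sigma)$. Hence $\mathcal{Q}(\delta) \cap \mathcal{Q}(\sigma)$ is exactly the set of $d$-simplices whose vertex set contains $\Delta_0(\delta) \cup \Delta_0(\sigma)$.

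First I would show that if this intersection is non-empty, then $\Delta_0(\delta)\cup\Delta_0(\sigma)$ is itself the vertex set of a genuine simplex $\tau$ of $\mathcal{L}$. Indeed, pick any $d$-simplex $\rho$ in the intersection; then $\Delta_0(\delta)\cup\Delta_0(\sigma) \subseteq \Delta_0(\rho)$, so this union has at most $d+1$ elements and spans a face $\tau$ of $\rho$. Since $\tau$ is a face of a simplex of $\mathcal{L}$, it is a simplex of $\mathcal{L}$; and since $\rho \notin \partial\mathcal{L}$ (as $\rho \in \Delta_d(\mathcal{L}) = \Delta'_d(\mathcal{L})$), one should check $\tau \notin \partial\mathcal{L}$ as well — this follows because $\tau$ contains $\delta$ and $\delta \in \mathcal{L}\setminus\partial\mathcal{L}$ by hypothesis, and no simplex containing an interior simplex can lie in the boundary. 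By construction $\tau \supseteq \delta$ and $\tau \supseteq \sigma$, and $\Delta_0(\tau) = \Delta_0(\delta)\cup\Delta_0(\sigma)$ is clearly minimal with that property: any simplex $\tau'$ containing both $\delta$ and $\sigma$ must have $\Delta_0(\tau') \supseteq \Delta_0(\delta)\cup\Delta_0(\sigma) = \Delta_0(\tau)$, so $\tau' \supseteq \tau$. Thus $\tau$ is well-defined as ``the smallest simplex containing both $\delta$ and $\sigma$.''

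Finally, with $\tau$ in hand, I would verify the set equality $\mathcal{Q}(\delta)\cap\mathcal{Q}(\sigma) = \mathcal{Q}(\tau)$ directly. For ``$\supseteq$'': any $d$-simplex containing $\tau$ contains both $\delta$ and $\sigma$, hence lies in $\mathcal{Q}(\delta)\cap\mathcal{Q}(\sigma)$. For ``$\subseteq$'': any $d$-simplex $\rho'$ with $\rho' \supseteq \delta$ and $\rho' \supseteq \sigma$ has $\Delta_0(\rho') \supseteq \Delta_0(\delta)\cup\Delta_0(\sigma) = \Delta_0(\tau)$, so $\rho' \supseteq \tau$ and $\rho' \in \mathcal{Q}(\tau)$. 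This closes the argument.

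The only genuinely delicate point — the main obstacle — is the existence step: one must be sure that $\Delta_0(\delta)\cup\Delta_0(\sigma)$, a priori just a set of vertices, really is the vertex set of a face of the complex rather than an ``accidental'' collection of vertices that happen to share no common simplex. The non-emptiness hypothesis $\mathcal{Q}(\delta)\cap\mathcal{Q}(\sigma)\neq\emptyset$ is exactly what rescues this: the common $d$-simplex $\rho$ certifies that these vertices are mutually compatible and span a face of $\rho$, hence of $\mathcal{L}$. I would also make explicit the small lemma that a face of a simplex lying outside $\partial\mathcal{L}$ need not itself lie outside $\partial\mathcal{L}$ in general, so the argument that $\tau \notin \partial\mathcal{L}$ must go through $\delta$ rather than through $\rho$; stating the conclusion is cleanest if one simply notes $\delta \subseteq \tau$ and invokes that simplices in the interior cannot be faces of boundary simplices.
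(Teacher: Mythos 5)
Your proof is correct and follows essentially the same route as the paper's: the nonempty intersection supplies a common $d$-simplex, inside which the union of the vertices of $\delta$ and $\sigma$ spans the face $\tau$ (the paper identifies this face by its color set $\col{\delta}\cup\col{\sigma}$ rather than by its vertex set, which is equivalent since the vertices of a $d$-simplex carry distinct colors). You are somewhat more explicit than the paper about verifying the final set equality and about why $\tau\not\subset\partial\mathcal{L}$, but the underlying argument is the same.
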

\begin{proof}
If $\mathcal{Q}(\delta)\cap \mathcal{Q}(\sigma)\neq \emptyset$, then there exists $\epsilon\in\Delta_d(\mathcal{L})$ such that $\epsilon\supset \delta,\sigma$.   Let $ C=\col{\delta}\cup\col{\sigma}$ and set $\tau$ to be the unique $(| C|-1)$-simplex in $\epsilon$, colored with the set of colors $ C$. Clearly, $\tau$ is the smallest simplex containing $\delta$ and $\sigma$, and $\mathcal{Q}(\delta)\cap \mathcal{Q}(\sigma)=\mathcal{Q}(\tau)$.
\end{proof}

\begin{lemma}[Even Support]
Let $\delta$ be a $k$-simplex not belonging to the boundary of the lattice, $\delta\subset\Delta'_k(\mathcal{L})$, with $0 \leq k< d$. Then
\begin{equation}
|\mathcal{Q}(\delta)|\equiv 0 \mod 2.
\end{equation}
\label{lemma:evenfaces}
\end{lemma}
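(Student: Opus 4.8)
The plan is to exploit the $(d+1)$-colorability of $\mathcal{L}$ to reduce the statement to a simple combinatorial fact about the link of $\delta$, namely that a $d$-simplex of $\mathcal{L}$ containing a $k$-simplex $\delta$ corresponds to a way of extending the coloring $\col{\delta}$ to all of $\mathbb{Z}_{d+1}$ by choosing, vertex by vertex, the missing colors. First I would fix $\delta\in\Delta'_k(\mathcal{L})$ and observe that $\col{\delta}$ is a set of exactly $k+1$ colors (distinct, since adjacent vertices have distinct colors), so the complement $C' = \mathbb{Z}_{d+1}\setminus\col{\delta}$ has $d-k$ elements. Each $d$-simplex $\sigma\in\mathcal{Q}(\delta)$ contains $\delta$ together with $d-k$ further vertices, and these carry precisely the colors in $C'$, one each. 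So $\mathcal{Q}(\delta)$ is in bijection with the set of $d$-simplices in the link of $\delta$, and I want to show this set has even cardinality.

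The key step is to set up an involution on $\mathcal{Q}(\delta)$ with no fixed points. Pick any two colors $a,b\in C'$ (this is possible since $k<d$ implies $|C'|=d-k\geq 2$ — this is exactly where the hypothesis $k<d$ enters). Given $\sigma\in\mathcal{Q}(\delta)$, let $\tau_\sigma$ be the codimension-one face of $\sigma$ obtained by deleting the unique vertex of $\sigma$ colored $b$; then $\tau_\sigma$ is a $(d-1)$-simplex containing $\delta$ whose vertex-colors are $\mathbb{Z}_{d+1}\setminus\{b\}$, in particular it is colored with the color $a$. Since $\delta\notin\partial\mathcal{L}$, one shows $\tau_\sigma\notin\partial\mathcal{L}$ either (a $(d-1)$-simplex not on the boundary lies in exactly two $d$-simplices — this is the manifold-with-boundary property of $\mathcal{L}$ from Condition~\ref{cond1}, and the interior of $\delta$ lies in the interior of $\tau_\sigma$). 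Hence $\tau_\sigma$ is a face of exactly two $d$-simplices $\sigma,\sigma'$, both of which contain $\delta$, so both lie in $\mathcal{Q}(\delta)$; define $\iota(\sigma)=\sigma'$. This $\iota$ is an involution on $\mathcal{Q}(\delta)$, and it has no fixed point because $\sigma\neq\sigma'$. Therefore $|\mathcal{Q}(\delta)|$ is even.

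The main obstacle I anticipate is the careful justification that $\tau_\sigma$ is an interior $(d-1)$-simplex and hence is shared by exactly two $d$-simplices: this needs the fact that $\mathcal{L}$ is a triangulated manifold-with-boundary (Condition~\ref{cond1}), so that every interior $(d-1)$-simplex is a face of precisely two $d$-simplices, together with the topological observation that since $\delta$ is interior and $\delta\subset\tau_\sigma$ with $\delta$ meeting the interior of $\tau_\sigma$, $\tau_\sigma$ cannot lie in $\partial\mathcal{L}$. Once this pairing is in hand, everything else is bookkeeping with colors. (An alternative, more hands-on route for the specific family of lattices would be to appeal directly to the explicit construction in Appendix~\ref{LatticeConstruction}, but the involution argument is cleaner and uses only the stated conditions.)
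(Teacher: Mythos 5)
Your involution argument is correct and is essentially the paper's own proof in different clothing: the paper first establishes the (Disjoint Union) Lemma~\ref{DisjointUnionLemma}, takes $C=\mathbb{Z}_{d+1}\setminus\{b\}$ for some color $b\notin\col{\delta}$, and writes $\mathcal{Q}(\delta)$ as a disjoint union of the sets $\mathcal{Q}(\tau)$ over interior $(d-1)$-simplices $\tau\supset\delta$ colored with $C$, each of size exactly $2$ --- your fixed-point-free involution $\iota$ is precisely the map swapping the two elements of each such pair, and both arguments lean on the same manifold-with-boundary fact that an interior $(d-1)$-simplex lies in exactly two $d$-simplices. One small slip: $k<d$ gives $|C'|=d-k\geq 1$, not $\geq 2$ (for $k=d-1$ there is only one spare color), but since the color $a$ is never used in your construction, picking the single color $b$ suffices and the proof stands; also, the cleanest way to see $\tau_\sigma\notin\partial\mathcal{L}$ is simply that $\tau_\sigma\subset\partial\mathcal{L}$ would force $\delta\subset\tau_\sigma\subset\partial\mathcal{L}$, contradicting $\delta\in\Delta'_k(\mathcal{L})$.
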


Before we prove the (Even Support) Lemma~\ref{lemma:evenfaces}, we explain its consequences. For $CC_{d}(x,z)$ to be a subsystem code, the stabilizer generators have to commute with each other, as well as with the gauge group generators. Notice that for two arbitrary $X$- and $Z$-type stabilizer generators to commute, the intersection of their supports has to contain even number of elements. Let $X$- and $Z$-type stabilizer generators be supported on $\delta\subset\Delta'_x(\mathcal{L})$ and $\sigma\subset\Delta'_z(\mathcal{L})$, respectively. If the intersection $\mathcal{Q}(\delta)\cap \mathcal{Q}(\sigma)$ is non-empty, then due to the (Intersection) Lemma~\ref{lemma:intersection} there exists a simplex $\tau$ such that $\mathcal{Q}(\delta)\cap \mathcal{Q}(\sigma)=\mathcal{Q}(\tau)$. Moreover, since $\delta$ is spanned by $x+1$ vertices and $\sigma$ by $z+1$ vertices, then $\tau$ is spanned by at most $x+z+2 \leq d$ vertices. Thus, $\tau$ is a $k$-simplex with $k<d$, and the (Even Support) Lemma ~\ref{lemma:evenfaces} applies, $|\mathcal{Q}(\delta)\cap \mathcal{Q}(\sigma)|=|\mathcal{Q}(\tau)|\equiv 0 \mod 2$, showing that $X(\delta)$ and $Z(\sigma)$ commute. The commutation of stabilizer generators with the gauge generators follows similarly.

We can obtain the (Even Support) Lemma~\ref{lemma:evenfaces} as a corollary of the following

\begin{lemma}[Disjoint Union]
Let $\mathcal{L}$ be a simplicial  $d$-complex which is $(d+1)$-colorable. Then, for a simplex $\delta\subset\mathcal{L}\setminus\partial\mathcal{L}$ and a chosen set of colors $C$, such that $\col{\delta}\subset C\subset\mathbb{Z}_{d+1}$, there exists a partition of the set of qubits supported on $\delta$ into a disjoint union of sets of qubits supported on $(|C|-1)$-simplices containing $\delta$, namely 
\begin{equation}
\mathcal{Q}(\delta)=\bigsqcup_{\substack{\sigma\supset\delta\\ \sigma\in\Delta'_{|C|-1}(\mathcal{L})\\  \col{\sigma}= C}} \mathcal{Q}(\sigma).
\label{DisjointUnionEq}
\end{equation}
\label{DisjointUnionLemma}
\end{lemma}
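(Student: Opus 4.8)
The plan is to reduce the whole statement to a single structural fact about colourings: \emph{in a $(d+1)$-colourable simplicial $d$-complex, the $d+1$ vertices of any $d$-simplex receive $d+1$ pairwise distinct colours.} This is immediate because any two vertices of a simplex span an edge, and the definition of $(d+1)$-colourability forbids the two endpoints of an edge from sharing a colour; since there are exactly $d+1$ vertices and $d+1$ available colours, the colouring restricted to the vertex set of any $d$-simplex $\epsilon$ is a bijection onto $\mathbb{Z}_{d+1}$. Everything else is bookkeeping built on this bijection.

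First I would unpack the definitions so the claim becomes concrete: $\mathcal{Q}(\delta)$ is the set of $d$-simplices of $\mathcal{L}$ that contain $\delta$, so the asserted identity says precisely that every $d$-simplex $\epsilon\supset\delta$ lies in exactly one of the sets $\mathcal{Q}(\sigma)$ appearing on the right-hand side. Fix such an $\epsilon$. By the colour bijection on the vertices of $\epsilon$, there is a unique subset of its vertices whose colour set is exactly $C$; let $\sigma$ be the $(|C|-1)$-simplex it spans, which is a face of $\epsilon$ and hence a simplex of $\mathcal{L}$, with $\col{\sigma}=C$. Since $\delta\subset\epsilon$ and $\col{\delta}\subset C$, every vertex of $\delta$ is a vertex of $\epsilon$ with colour in $C$, hence a vertex of $\sigma$; therefore $\delta\subset\sigma\subset\epsilon$, so $\epsilon\in\mathcal{Q}(\sigma)$ with $\sigma$ of the required dimension and colour. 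Conversely, any $\sigma$ indexing the right-hand side satisfies $\mathcal{Q}(\sigma)\subset\mathcal{Q}(\delta)$ since $\sigma\supset\delta$. This already yields equality of the two sides as ordinary unions.

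For disjointness I would argue by the same bijection: if $\epsilon\in\mathcal{Q}(\sigma_1)\cap\mathcal{Q}(\sigma_2)$ with $\sigma_1,\sigma_2$ both indexing the right-hand side, then $\sigma_1,\sigma_2$ are faces of $\epsilon$ with $\col{\sigma_1}=\col{\sigma_2}=C$, so by injectivity of the colouring on the vertices of $\epsilon$ they are spanned by the same vertex subset, whence $\sigma_1=\sigma_2$. Thus the sets $\mathcal{Q}(\sigma)$ are pairwise disjoint, and the union is indeed a disjoint union, completing the proof.

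The only point that needs genuine care — and the most likely source of a gap — is verifying that the simplex $\sigma$ produced above actually lies in $\Delta'_{|C|-1}(\mathcal{L})$, i.e.\ is not a boundary simplex, as the indexing set on the right-hand side requires (and as is needed even for $\mathcal{Q}(\sigma)$ to be defined). This follows because $\partial\mathcal{L}$ is a subcomplex of $\mathcal{L}$ — it is a union of closed simplices and hence closed under passing to faces — so if $\sigma\in\partial\mathcal{L}$ then its face $\delta$ would also lie in $\partial\mathcal{L}$, contradicting the hypothesis $\delta\subset\mathcal{L}\setminus\partial\mathcal{L}$. I would state this subcomplex property explicitly, as it is the one non-combinatorial input; with it in hand, no further obstacle is expected.
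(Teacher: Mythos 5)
Your proof is correct and follows essentially the same route as the paper's: both arguments rest on the observation that the colouring restricted to the vertices of any $d$-simplex is a bijection onto $\mathbb{Z}_{d+1}$, which gives a unique face of $\epsilon$ with colour set $C$ (covering) and forbids two distinct simplices of identical colour set from sharing a $d$-simplex (disjointness). Your explicit verification that $\sigma\notin\partial\mathcal{L}$ is a welcome addition — the paper's own proof passes over this point silently.
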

\begin{proof}
First note, that two different $k$-simplices $\delta_1$ and $\delta_2$ in $\mathcal{L}\setminus\partial\mathcal{L}$ colored with the same colors, $\col{\delta_1}=\col{\delta_2}$, cannot belong to the same $l$-simplex, $l\geq k$, thus do not share a qubit, $\mathcal{Q}(\delta_1)\cap \mathcal{Q}(\delta_2)=\emptyset$. Moreover, if $\mathcal{Q}(\epsilon)\subset \mathcal{Q}(\delta)$, where $\epsilon\in\Delta_d(\mathcal{L})$, then $\epsilon\supset\delta$ and there exists a unique simplex $\sigma\subset\epsilon$ colored with colors $B$. Since $\col{\sigma}= C\supset\col{\delta}$, then $\sigma\supset\delta$, which finishes the proof of the (Disjoint Union) Lemma~\ref{DisjointUnionLemma}.
\end{proof}

In particular, the set of qubits supported on any $k$-simplex $\delta$ in $\mathcal{L}\setminus\partial\mathcal{L}$ with $k<d$ can be decomposed as a disjoint union of qubits suppoted on $(d-1)$-simplices $\sigma$ containing $\delta$ and colored with a chosen set of $d$ colors, $C\supset\col{\delta}$. Notice, that $|\mathcal{Q}(\sigma)|=2$ for any $\sigma\in\Delta'_{d-1}(\mathcal{L})$, which immediately yields 
\begin{equation}
|\mathcal{Q}(\delta)|=\sum_{\substack{\sigma\supset\delta\\ \sigma\in\Delta'_{d-1}(\mathcal{L})\\ \col{\sigma}= C}} |\mathcal{Q}(\sigma)|\equiv 0 \mod 2,
\end{equation}
showing the (Even Support) Lemma~\ref{lemma:evenfaces}.

The property needed for the transversal implementation of the gate $\overline{R}_n$, presented in Section~\ref{sec:gates}, can be encapsultated in the following lemma

\begin{lemma}[Bipartition of Qubits]
The set of $d$-simplices in $\mathcal{L}$, $\Delta_d(\mathcal{L})$, is bipartite.
\label{lemma:qubitsbipartition}
\end{lemma}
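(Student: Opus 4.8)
The plan is to reduce the statement to the Graph Bipartition Lemma (Lemma~\ref{lemma:Bipartition}) by constructing an appropriate graph on the $d$-simplices of $\mathcal{L}$ and showing that all its cycles are even. The natural graph $\mathcal{A}$ to use is the \emph{adjacency graph} of $d$-simplices: its vertex set is $\Delta_d(\mathcal{L})$, and two $d$-simplices are joined by an edge whenever they share a $(d-1)$-face (equivalently, by the (Even Support) and (Disjoint Union) lemmas, whenever there is a $(d-1)$-simplex $\sigma$ with $\mathcal{Q}(\sigma)$ equal to that pair). Bipartiteness of $\Delta_d(\mathcal{L})$ in the sense used for the $\overline{R}_n$ construction means exactly that this graph is 2-colorable, so it suffices to prove $\mathcal{A}$ has no odd cycles.

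The key step is a coloring/parity argument. Each internal $(d-1)$-simplex $\sigma$ is colored with $d$ of the $d+1$ available colors, so it is missing exactly one color; call this the \emph{type} of $\sigma$, an element of $\mathbb{Z}_{d+1}$. I would associate to each edge of $\mathcal{A}$ the type of the $(d-1)$-simplex producing it, giving an edge-coloring of $\mathcal{A}$ by $\mathbb{Z}_{d+1}$. The crucial local fact is: around any fixed internal $(d-2)$-simplex $\tau$ colored with a set $C$ of $d-1$ colors, the $d$-simplices containing $\tau$, together with the $(d-1)$-simplices containing $\tau$, form a "cycle" (this is the link of $\tau$, a triangulated circle), and the two colors absent from $C$ alternate as one goes around this cycle; hence the edges of any cycle of $\mathcal{A}$ that passes through such a local configuration use only two alternating types. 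More globally, one shows that in any closed walk in $\mathcal{A}$, each color $c \in \mathbb{Z}_{d+1}$ appears an even number of times among the edge-types: the sub-walk formed by edges of type $c$ lives in the subcomplex obtained by "contracting out" color $c$, which is itself (after suitable interpretation) a graph whose relevant cycles are boundaries, hence even — alternatively, one argues directly that crossing a type-$c$ wall flips a $\mathbb{Z}_2$-valued "side" function defined using the embedding of $\mathcal{L}$ in the $d$-sphere and contractibility. Since each type contributes evenly and there are finitely many types, the total length of any cycle in $\mathcal{A}$ is even.

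Concretely, the cleanest route I would take is the topological one, mirroring the two-dimensional argument in the excerpt: $\mathcal{L}$ sits inside a $(d+1)$-colorable tiling $\mathcal{L}_0$ of the $d$-sphere (Condition~\ref{cond1} together with the construction described after it), and for a fixed color $c$ the $(d-1)$-simplices of type $c$ glue together into a collection of closed $(d-1)$-dimensional "color hypersurfaces" embedded in the $d$-sphere. Each such hypersurface is two-sided (the $d$-sphere is orientable and, more to the point, every $(d-1)$-cycle in $S^d$ separates), so it partitions the $d$-simplices into two classes. Doing this for one fixed color $c$ already yields a 2-coloring of $\Delta_d(\mathcal{L})$: two $d$-simplices sharing a type-$c$ wall lie on opposite sides of it, and — using that the union over the remaining colors of their hypersurfaces is connected and complementary — two $d$-simplices sharing a wall of any other type lie on the \emph{same} side of the type-$c$ hypersurface. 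Checking this last claim is where the real work is, and it is the step I expect to be the main obstacle: one must verify, using Lemma~\ref{DisjointUnionLemma} and the structure of the link of a $(d-2)$-simplex, that walking across a non-$c$-colored wall never changes which side of the $c$-hypersurface you are on. The two-dimensional picture (faces of $\mathcal{L}$ bipartite, Fig.~\ref{fig:BipartiteGraph}(b)) is exactly the $d=2$ shadow of this argument and is a useful sanity check.

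An alternative, more combinatorial packaging avoids topology: define $f:\Delta_d(\mathcal{L}) \to \mathbb{Z}_2$ and prove it is well-defined and a proper 2-coloring by showing that the map "number of type-$c$ walls crossed, mod 2" along a path depends only on the endpoints. This reduces to showing every closed walk crosses an even number of type-$c$ walls, which in turn follows because the relevant cycles of the "type-$c$ dual graph" are null-homotopic in $S^d$ (Condition~\ref{cond1}) and hence bound, just as in the proof of bipartiteness of $\mathcal{L}_0^*$ in Section~\ref{sec:2dim}. Either way, the structure is: build the adjacency graph, produce a candidate 2-coloring from a single color class of hypersurfaces, and verify properness using the (Intersection), (Even Support), and (Disjoint Union) lemmas plus contractibility of cycles on the sphere.
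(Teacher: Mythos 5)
Your main line of attack is the same as the paper's: form the adjacency graph with vertex set $\Delta_d(\mathcal{L})$ and one edge per internal $(d-1)$-simplex, show that every cycle of this graph is even, and invoke the (Graph Bipartition) Lemma~\ref{lemma:Bipartition}. The paper does this very economically: the elementary cycles of the graph are the links of internal $(d-2)$-simplices $\delta$, which have length $|\mathcal{Q}(\delta)|\equiv 0 \mod 2$ by the (Even Support) Lemma~\ref{lemma:evenfaces}, and since every cycle in $\mathcal{L}$ is contractible these generate the cycle space, so all cycles are even. Your ``local fact'' about the two missing colors alternating around the link of a $(d-2)$-simplex is a re-derivation of that evenness, and your ``each edge-type occurs an even number of times in any closed walk'' packaging, summed over all $d+1$ types, reaches the same conclusion. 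So the first formulation of your argument is sound, just heavier than necessary.

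The route you single out as ``the cleanest,'' however, contains a genuine error. Fixing one color $c$ and partitioning $\Delta_d(\mathcal{L})$ by which side of the type-$c$ hypersurfaces a $d$-simplex lies on does \emph{not} produce the bipartition the lemma (and the subsequent (Property of $T$) Lemma~\ref{lemmadual}) requires: that bipartition must place the two $d$-simplices sharing \emph{any} internal $(d-1)$-face in opposite classes, whereas your partition --- by your own description --- keeps two $d$-simplices sharing a wall of type $\neq c$ on the same side. Already for $d=2$ this fails: two triangles adjacent across an edge whose type is not $c$ lie on the same side of every type-$c$ loop, yet they must land in opposite classes $T$ and $T^c$ (every internal edge must have one triangle in each class for Lemma~\ref{lemmadual} to hold). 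The single-color construction only establishes that closed walks cross type-$c$ walls an even number of times; to conclude bipartiteness you must combine the parity statements for all $d+1$ colors, or simply use the link-plus-contractibility argument directly as the paper does.
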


Let us first explain the (Bipartition of Qubits) Lemma~\ref{lemma:qubitsbipartition} --- the $d$-simplices in $\mathcal{L}$ can be split into two disjoint sets, where $d$-simplices in the first set share $(d-1)$-faces only with $d$-simplices from the second set, and vice versa.  

\begin{proof}
First, construct a graph $G=(V,E)$ with the set of vertices $V=\Delta_d(\mathcal{L})$ and the set of edges $E=\Delta'_{d-1}(\mathcal{L})$. Two vertices $v,w\in V$ are connected by an edge $e\in E$ iff  $d$-simplices corresponding to $v$ and $w$ share a $(d-1)$-face corresponding to $e$. Since for all $\delta\in\Delta'_{d-2}(\mathcal{L})$ the (Even Support) Lemma~\ref{lemma:evenfaces} gives $|\mathcal{Q}(\delta)|\equiv 0 \mod 2$, and every cycle in $\mathcal{L}$ is contractible, we obtain that every cycle in the graph $G$ is even. Using the (Graph Bipartition) Lemma~\ref{lemma:Bipartition} we immediately obtain that $G$ is bipartite. This shows that the set of $d$-simplices in $\mathcal{L}$, which is equal to the set of qubits, $\Delta_d(\mathcal{L})=Q$, is bipartite.
\end{proof}

\section{Transversal gates in color codes}
\label{sec:gates}

As mentioned in the introduction, transversal gates are fault-tolerant. In this section, we first review some relevant features of a class of CSS subsystem codes, which includes the color codes defined in Sectionref{sec:Ddim}. Then, we examine transversal gates of codes in this class. We show that $\overline{\cnot}$ is transversal in any such code and under certain additional conditions the Hadamard and $\overline{R}_n$ can be transversal, too. Finally, we show that the additional conditions are satisfied by certain color codes.

\subsection{Subsystem codes}

A CSS subsystem code \cite{Poulin2005,Bacon2006} is specified by its gauge group $\mathcal{G}$, which is a subgroup of the Pauli group on physical qubits $Q$. Each $X$-type gauge group generator $X(G^x)$ consists of Pauli $X$ operators applied to qubits $G^x$; similarly for $Z$-type generators. The stabilizer group $\mathcal{S} \subseteq \mathcal{G}$ is the group generated by all Pauli operators $X(S^x)$ and $Z(S^z)$ contained in $\mathcal{G}$, which commute with every element of $\mathcal{G}$. Note that a stabilizer code is a special case of a subsystem code, for which $\mathcal{G} = \mathcal{S}$. The codewords are the $+1$ eigenvectors of all elements of $\mathcal{S}$. We say that two codewords are equivalent if they differ by application of a linear combination of elements of $\mathcal{G} \setminus \mathcal{S}$. This allows one to decompose the subspace of codewords into a tensor product of two spaces: \emph{logical} qubits and \emph{gauge} qubits. Elements of $\mathcal{G} \setminus \mathcal{S}$ have no effect on the state of the logical qubits, but may change that of the gauge qubits.

For a subsystem code, we say a unitary implements a \emph{logical gate} if it preserves the space of all codewords, and has an action on the logical qubits which is independent of any action on the gauge qubits. A logical gate $\overline{U}$ can be implemented on the logical qubits $\ket{\psi}$ as a \emph{bare} gate $U_{\text{bare}}$ which leaves gauge qubits $\ket{g}$ unchanged, $U_{\text{bare}}: |\psi \rangle | g \rangle \mapsto (\overline{U} |\psi \rangle ) | g \rangle$, or more generally as a \emph{dressed} gate $U_{\text{dressed}}$, which can affect the gauge qubits too, $U_{\text{dressed}}: |\psi \rangle | g \rangle \mapsto (\overline{U} |\psi \rangle ) | g' \rangle$. 

Consider the class of CSS subsystem codes which
\begin{itemize}
\item encode one logical qubit,
\item have bare logical $\overline{X}$ and $\overline{Z}$ implemented by $X(Q)$ and $Z(Q)$.
\end{itemize}
Note that these codes are defined on an odd number of physical qubits, $|Q|\equiv 1\mod 2$, since $\overline{X}$ and $\overline{Z}$ anticommute.

We can define a pair of inequivalent (and not normalized) codewords, which are representatives of logical $\ket{\overline{0}}$ and $\ket{\overline{1}}$, namely
\begin{eqnarray}
\label{Xgaugebasis1}
|\overline{0} \rangle |g_X\rangle &=& \sum_{ X(G) \in \mathcal{G}} X(G) \ket{\mathbf{0}},\\
\label{Xgaugebasis2}
|\overline{1} \rangle  |g_X\rangle &=& \overline{X} |\overline{0} \rangle  |g_X\rangle,
\end{eqnarray}
where $\ket{\mathbf{0}}$ is a state with every physical qubit set to $\ket{0}$, and $|g_X\rangle$ is a fixed state of the gauge qubits. One can verify that the states $|\overline{0} \rangle |g_X\rangle$ and $|\overline{1} \rangle  |g_X\rangle$ are $+1$ eigenstates of $\mathcal{S}$, and satisfy $\overline{Z}|\overline{0} \rangle |g_X\rangle = |\overline{0} \rangle |g_X\rangle$, $\overline{Z}|\overline{1} \rangle |g_X\rangle = -|\overline{1} \rangle |g_X\rangle$. They are also $+1$ eigenstates of every $X$-type generator of $\mathcal{G}$. All equivalent codewords can be generated from $|\overline{0} \rangle |g_X\rangle, |\overline{1} \rangle |g_X\rangle$ by application of a linear combination of elements from $\mathcal{G} \setminus \mathcal{S}$. An alternative pair of representatives of logical $\ket{\overline{0}}$ and $\ket{\overline{1}}$ is 
\begin{eqnarray}
\label{Zgaugebasis1}
|\overline{0} \rangle |g_Z\rangle &=& \sum_{ X(S) \in \mathcal{S}} X(S) \ket{\mathbf{0}},\\
\label{Zgaugebasis2}
|\overline{1} \rangle  |g_Z\rangle &=& \overline{X} |\overline{0} \rangle  |g_Z\rangle,
\end{eqnarray}
which are $+1$ eigenstates of all $Z$-type generators of $\mathcal{G}$.

\subsection{Transversal gates in subsystem codes}

Consider a CSS subsystem code with one logical qubit, and $\overline{X}$ and $\overline{Z}$ implemented by $X(Q)$ and $Z(Q)$. To check that a physical unitary $U$ implements a dressed logical gate $\overline{U}$ in such a code, one can verify its action on $|\overline{0} \rangle |g\rangle$, and $|\overline{1} \rangle  |g\rangle$ for every state $|g \rangle$ of the gauge qubits. Alternatively, it is sufficient to verify that $U$ has the correct action by conjugation on $\overline{X}$ and $\overline{Z}$, and that it preserves\footnote{Note that preservation of the gauge group under the action of a physical unitary $U$ is a sufficient, but not a necessary condition for $U$ to implement a dressed logical gate.} the gauge group $\mathcal{G}$. 

The logical gate $\overline{\cnot}$ can be implemented transversally between two identical copies of a CSS subsystem code by applying a physical gate $\cnot$ to every pair of corresponding qubits in the first and the second copy. This can be verified by checking that under conjugation by $\overline{\cnot}$, $\overline{X}\overline{I} \mapsto \overline{X}\overline{X}$, $\overline{I}\overline{X} \mapsto \overline{I}\overline{X}$, $\overline{Z}\overline{I} \mapsto \overline{Z}\overline{I}$, $\overline{I}\overline{Z} \mapsto \overline{Z}\overline{Z}$ and $\mathcal{G}\otimes\mathcal{G}$ is preserved\footnote{Notice, that generators of $\mathcal{G}\otimes\mathcal{G}$ are mapped under conjugation to another set of generators, namely $X(G)\otimes I(G) \mapsto X(G)\otimes X(G)$, $Z(G)\otimes I(G) \mapsto I(G)\otimes Z(G)$, $I(G)\otimes X(G) \mapsto I(G)\otimes X(G)$ and $I(G)\otimes Z(G) \mapsto Z(G)\otimes Z(G)$.}.

For a \emph{self-dual} CSS subsystem code, namely a code with $X$- and $Z$-type gauge group generators supported on the same sets of qubits, $\mathcal{G}=\langle X(G_i),Z(G_i)\rangle$, a dressed logical Hadamard gate can be implemented transversally as $\overline{H}=H(Q)$. To see this, observe that under conjugation by $H(Q)$, $\overline{X} \mapsto \overline{Z}$, $\overline{Z} \mapsto \overline{X}$, $X(G)\mapsto Z(G)$ and $Z(G)\mapsto X(G)$, and thus $\mathcal{G}$ is preserved.

The last logical gate we analyze is $\overline{R}_n=\textrm{diag}\left(1,e^{\frac{2\pi i}{2^n}}\right)$, for an integer $n>0$. We aim to implement $\overline{R}_n$ transversally as a bare logical gate by applying the same single-qubit unitary to some subset $T\subset Q$ of the physical qubits, and applying that unitary's inverse to the rest of the qubits $T^c := Q \setminus T$. Specifically, we now prove that $\overline{R}_n$ is implemented by $R= R^k_n (T) R^{-k}_n (T^c)$, for some suitably chosen $k\in\{1,2,\ldots,2^n-1\}$, provided that $T$ and $\mathcal{G}$ satisfy
\begin{equation}
\forall X(G)\in \mathcal{G}: |T\cap G|\equiv |T^c\cap G| \mod 2^n .
\label{eq:conditiononT}
\end{equation}
First, pick $k$ such that 
\begin{equation}
k(|T| - |T^c|)\equiv 1\mod 2^n.
\label{eq:definek}
\end{equation}
The existence of $k$ is guaranteed by Bezout's lemma, since $|Q|$ is odd, $|T| - |T^c| = 2|T| - |Q|  \equiv 1 \mod 2$, and thus \linebreak[4] $\text{gcd}(2|T| - |Q|,2^n)=1$. 
Noting that $R_n^{\pm k}|0\rangle = |0 \rangle$ and $R_n^{\pm k} X  = e^{\pm \frac{2\pi i k}{2^n}} X R_n^{\mp k}$, we obtain
\begin{eqnarray}
R |\overline{0}\rangle |g_X\rangle  &=& \sum_{X(G)\in \mathcal{G}} R^k_n(T) R^{-k}_n(T^c ) X(G) |\mathbf{0}\rangle\\
&=& \sum_{X(G)\in \mathcal{G}} e^{\frac{2\pi i k}{2^n}|T\cap G|} e^{-\frac{2\pi i k}{2^n}|T^c \cap G|} X(G)  \ket{\mathbf{0}}\ \ \ \ \ \\
 &=& \sum_{X(G)\in \mathcal{G}} X(G)  |\mathbf{0}\rangle=|\overline{0}\rangle  |g_X\rangle  ,\\
  R |\overline{1}\rangle |g_X\rangle 
 & =& R^k_n(T) R^{-k}_n(T^c )  X(Q) |\overline{0} \rangle |g_X\rangle \\
  &=& e^{\frac{2\pi i k}{2^n}|T|} e^{-\frac{2\pi i k}{2^n}|T^c|} X(Q) R |\overline{0} \rangle |g_X\rangle \\
  &=& e^{\frac{2\pi i }{2^n} } X(Q) | \overline{0} \rangle |g_X\rangle = e^{\frac{2\pi i }{2^n} } | \overline{1} \rangle |g_X\rangle,
\end{eqnarray}
which shows that $R$ correctly implements logical $\overline{R}_n$ when the gauge qubits are in the state $|g_X\rangle$. However, all other states of the gauge qubits can be reached by application of $Z$-type operators from $\mathcal{G} \setminus \mathcal{S}$, which all commute with $R$ (since it is diagonal in the $Z$-basis). Therefore for any state $|g \rangle$ of the gauge qubits, it must be that $R: |\overline{0}\rangle |g \rangle \mapsto  |\overline{0}\rangle |g \rangle, |\overline{1}\rangle |g \rangle \mapsto  e^{\frac{2\pi i }{2^n} } |\overline{1}\rangle |g \rangle$, verifying that $R$ implements the bare logical gate $\overline{R}_n$. 

It may not be obvious that there exists a set $T\subset Q$ satisfying (\ref{eq:conditiononT}) for a given code. In the later parts of this paper we show an explicit construction of $T$ for color codes in  $d$ dimensions, with $n \leq d$. Notice that condition~(\ref{eq:conditiononT}) can be inferred from the following condition
\begin{equation}
\left| T\cap \bigcap_{i=1}^m G_i\right| \equiv \left| T^c\cap \bigcap_{i=1}^m G_i\right| \mod 2^{n-m+1},
\label{eq:conditiononT2}
\end{equation}
where $m=1,\ldots,n$ and $\{ X(G_1),\ldots, X(G_m)\}$ is any subset of the $X$-type generators of the gauge group $\mathcal{G}$. To see the implication (\ref{eq:conditiononT2})$\implies$(\ref{eq:conditiononT}) notice, that for any $X(G)\in\mathcal{G}$, we can write it as product of generators, namely $X(G)=\prod ^{m}_{i=1}X(G_i)$. Then
\begin{equation}
G=G_1\veebar G_2 \veebar\ldots \veebar G_m,
\end{equation}
where we used the symmetric difference of sets, $A\veebar B:=(A\setminus B)\cup(B\setminus A)$. Using the Inclusion-Exclusion Principle for symmetric difference\footnote{For sets $A_1,A_2,\ldots,A_m$, we have $|A_1\veebar A_2\veebar \ldots\veebar A_m|=\sum_i |A_i|-2\sum_{i\neq j} |A_i\cap A_j|+\ldots +(-2)^{m-1}|A_1\cap A_2\cap\ldots\cap A_m|.$}
we obtain 
\begin{eqnarray}
|T\cap G|&=& |T\cap (G_1\veebar G_2\veebar\ldots\veebar G_m )|\\
&=& \sum_i |T\cap G_i |-
2\sum_{i\neq j}|T\cap (G_i\cap G_j )|+ \\
&& 4\sum_{i\neq j\neq k}|T\cap (G_i\cap G_j\cap G_k )|-\ldots  \\
&&  +(-2)^{m-1}|T\cap (G_1 \cap G_2\cap\ldots\cap G_m )|,
\end{eqnarray}
and a similar expression for $|T^c \cap G|$.
Clearly, if condition (\ref{eq:conditiononT2}) holds, then $|T\cap G|- |T^c \cap G|\equiv 0 \mod 2^n$, showing (\ref{eq:conditiononT}). Moreover, condition (\ref{eq:conditiononT2}) is easier to verify than condition~(\ref{eq:conditiononT}), since we only need to check it for the $X$-type generators of $\mathcal{G}$, rather than for every $X$-type element of $\mathcal{G}$.

We can summarize the discussion of how to implement transversal $\overline{R}_n$ in the following lemma
\begin{lemma}[Sufficient Condition]
\label{lem:sufficientcondition}
Consider a CSS subsystem code encoding one logical qubit. Let the code be defined on a set of physical qubits $Q$, where $|Q|$ is odd and with bare logical operators $\overline{X}=X(Q)$ and $\overline{Z}=Z(Q)$. If there exists $T\subset Q$, such that for any $m=1,\ldots,n$:
\begin{equation}
\left| T\cap\bigcap_{i=1}^m G_i\right| \equiv \left| T^c\cap\bigcap_{i=1}^m G_i\right| \mod 2^{n-m+1},
\label{eq:sufficientcondition}
\end{equation}
for every subset $\{ X(G_1),\ldots, X(G_m)\}$ of the $X$-type gauge generators of the code, then 
\begin{equation}
R= R^k_n (T) R^{-k}_n (T^c)
\end{equation}
implements logical $\overline{R}_n$, where $k$ is a solution to $k(|T| - |T^c|)\equiv 1\mod 2^n$ and $T^c = Q \setminus T$.
\end{lemma}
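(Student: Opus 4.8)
The plan is to assemble the lemma directly from the ingredients already laid out in the surrounding text, since almost all of the work has been done: the derivation that condition~(\ref{eq:conditiononT2}) implies condition~(\ref{eq:conditiononT}) via the Inclusion--Exclusion Principle for symmetric differences, and the explicit computation showing that $R = R^k_n(T) R^{-k}_n(T^c)$ acts correctly on the representative codewords $|\overline{0}\rangle|g_X\rangle$ and $|\overline{1}\rangle|g_X\rangle$. The lemma statement is just a clean repackaging of these facts, so the proof is mainly a matter of invoking them in the right order and filling the small gap that the hypothesis is stated for the gauge \emph{generators} rather than for all gauge elements.

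First I would observe that the hypothesis~(\ref{eq:sufficientcondition}) is precisely condition~(\ref{eq:conditiononT2}) with $m$ ranging over $1,\ldots,n$, so by the Inclusion--Exclusion argument given above (expanding $|T\cap G|$ and $|T^c\cap G|$ for $G = G_1\veebar\cdots\veebar G_m$ and noting that the $j$-th term carries a factor $(-2)^{j-1}$, hence is divisible by $2^{n-j+1}\cdot 2^{j-1} = 2^n$ after applying the hypothesis at level $m=j$) we conclude that every $X$-type gauge element $X(G)\in\mathcal{G}$ satisfies $|T\cap G|\equiv |T^c\cap G|\bmod 2^n$, which is condition~(\ref{eq:conditiononT}). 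Second, I would define $k$ as a solution of $k(|T|-|T^c|)\equiv 1\bmod 2^n$; its existence follows from Bezout's lemma because $|Q|$ odd forces $|T|-|T^c| = 2|T|-|Q|$ to be odd, hence coprime to $2^n$.

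Third, I would run the short calculation already displayed in the text: using $R_n^{\pm k}|0\rangle = |0\rangle$ and the commutation relation $R_n^{\pm k}X = e^{\pm 2\pi i k/2^n}X R_n^{\mp k}$, each term $X(G)|\mathbf{0}\rangle$ in the sum defining $|\overline{0}\rangle|g_X\rangle$ picks up a phase $e^{(2\pi i k/2^n)(|T\cap G| - |T^c\cap G|)} = 1$ by condition~(\ref{eq:conditiononT}), so $R$ fixes $|\overline{0}\rangle|g_X\rangle$; applying the relation to $X(Q) = \overline{X}$ and using $k(|T|-|T^c|)\equiv 1$ gives $R|\overline{1}\rangle|g_X\rangle = e^{2\pi i/2^n}|\overline{1}\rangle|g_X\rangle$. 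Finally, since $R$ is diagonal in the $Z$-basis it commutes with every $Z$-type element of $\mathcal{G}\setminus\mathcal{S}$, and these generate all states of the gauge qubits from $|g_X\rangle$; hence $R$ acts as $\overline{R}_n$ on the logical qubit irrespective of the gauge-qubit state, i.e.\ $R$ is the bare logical $\overline{R}_n$. I do not anticipate a genuine obstacle here; the only point requiring care is the bookkeeping of powers of two in the Inclusion--Exclusion step — making sure the $2^{n-m+1}$ modulus at level $m$ combines with the $(-2)^{m-1}$ coefficient to yield divisibility by $2^n$ uniformly — and being explicit that preserving the code space plus having the stated conjugation action on $\overline{X},\overline{Z}$ suffices to identify the logical gate.
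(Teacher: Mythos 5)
Your proposal is correct and follows essentially the same route as the paper: the lemma is stated there precisely as a summary of the preceding discussion, namely the inclusion--exclusion argument showing that the generator-level condition implies $|T\cap G|\equiv|T^c\cap G|\bmod 2^n$ for every $X$-type gauge element, followed by the explicit action of $R$ on $|\overline{0}\rangle|g_X\rangle$ and $|\overline{1}\rangle|g_X\rangle$ and the observation that $R$ commutes with the $Z$-type gauge operators. The only point worth noting explicitly (which the paper also leaves implicit) is that when a gauge element is a product of $j>n$ generators, the corresponding inclusion--exclusion terms are divisible by $2^n$ simply because of the coefficient $(-2)^{j-1}$, so the hypothesis at levels $m\le n$ suffices.
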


\subsection{Transversal implementation of $\overline{R}_n$ in color code}

Here we show how to implement the logical gate $\overline{R}_n$ transversely in the color code $CC_{\mathcal{L}} (x,z)$, for any integer $n \leq \dim(\mathcal{L})/(x+1)$. One applies $R=R^k_n(T) R^{-k}_n(T^c)$ for some integer $k$, where $T$ and its compliment $T^c = Q\setminus T$ correspond to the bipartite decomposition of qubits $Q$ specified in the (Bipartition of Qubits) Lemma~\ref{lemma:qubitsbipartition}. We make use of the following property
\begin{lemma}[Property of $T$]
For any $m$-simplex $\sigma$ in $\mathcal{L}\setminus\partial\mathcal{L}$ with $m<d$
\begin{equation}
|T \cap \mathcal{Q}(\sigma)| = |T^c \cap \mathcal{Q}(\sigma)|.
\end{equation}
\label{lemmadual}
\end{lemma}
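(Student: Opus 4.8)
The plan is to deduce the (Property of $T$) Lemma~\ref{lemmadual} directly from the (Disjoint Union) Lemma~\ref{DisjointUnionLemma} together with the defining property of the bipartition $T$. First I would recall that $T$ was constructed in the (Bipartition of Qubits) Lemma~\ref{lemma:qubitsbipartition} as one side of a bipartition of the graph $G$ whose vertices are the $d$-simplices of $\mathcal{L}$ and whose edges are the interior $(d-1)$-simplices, two $d$-simplices being adjacent exactly when they share an interior $(d-1)$-face. The key observation is then: for any interior $(d-1)$-simplex $\rho$ we have $|\mathcal{Q}(\rho)| = 2$, and the two $d$-simplices of $\mathcal{Q}(\rho)$ are adjacent in $G$, hence lie on opposite sides of the bipartition; so $|T\cap\mathcal{Q}(\rho)| = |T^c\cap\mathcal{Q}(\rho)| = 1$.

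Next I would reduce the general $m$-simplex case to this base case. Fix an $m$-simplex $\sigma\in\Delta'_m(\mathcal{L})$ with $m<d$, and choose a set of colors $C$ with $\col{\sigma}\subset C\subset\mathbb{Z}_{d+1}$ and $|C| = d$. Applying the (Disjoint Union) Lemma~\ref{DisjointUnionLemma} with this $C$ gives
\begin{equation}
\mathcal{Q}(\sigma) = \bigsqcup_{\substack{\rho\supset\sigma\\ \rho\in\Delta'_{d-1}(\mathcal{L})\\ \col{\rho} = C}} \mathcal{Q}(\rho).
\end{equation}
Since the union is disjoint, intersecting both sides with $T$ (respectively $T^c$) and taking cardinalities yields $|T\cap\mathcal{Q}(\sigma)| = \sum_\rho |T\cap\mathcal{Q}(\rho)|$ and $|T^c\cap\mathcal{Q}(\sigma)| = \sum_\rho |T^c\cap\mathcal{Q}(\rho)|$, where both sums range over the same index set of interior $(d-1)$-simplices $\rho\supset\sigma$ with $\col{\rho}=C$. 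By the base-case observation each term in the first sum equals the corresponding term in the second, so the two totals agree, which is exactly the claim.

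The one point that needs a little care — and which I expect to be the main (though minor) obstacle — is justifying that every $(d-1)$-simplex $\rho$ appearing in the disjoint-union decomposition is \emph{interior}, i.e.\ lies in $\Delta'_{d-1}(\mathcal{L})$ rather than in $\partial\mathcal{L}$, so that $|\mathcal{Q}(\rho)| = 2$ really does hold and both its $d$-simplices are genuine graph-vertices joined by a graph-edge. This is already built into the statement of the (Disjoint Union) Lemma~\ref{DisjointUnionLemma}, whose index set is $\Delta'_{|C|-1}(\mathcal{L})$; combined with the remark (made just after that lemma) that $|\mathcal{Q}(\sigma')| = 2$ for any $\sigma'\in\Delta'_{d-1}(\mathcal{L})$, the reduction goes through. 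Everything else is bookkeeping, so the proof is short: invoke (Disjoint Union) to split $\mathcal{Q}(\sigma)$ into pairs, observe each pair is split evenly by the bipartition, and sum.
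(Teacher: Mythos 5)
Your proof is correct and follows essentially the same route as the paper's: both establish that each interior $(d-1)$-simplex has exactly one qubit on each side of the bipartition, then apply the (Disjoint Union) Lemma~\ref{DisjointUnionLemma} with a $d$-element color set $C\supset\col{\sigma}$ and sum over the pieces. Your extra care in checking that the $(d-1)$-simplices in the decomposition are interior is a nice touch, but the argument is the same.
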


\begin{proof}
By the choice of the set $T$, every $(d-1)$-simplex $\delta$ has one qubit in $T$, and one qubit in $T^c = Q\setminus T$, which is equivalent to $|T \cap \mathcal{Q}(\delta)|=|T^c \cap \mathcal{Q}(\delta)|$. Using the (Disjoint Union) Lemma~\ref{DisjointUnionLemma}, we can decompose the set of qubits $\mathcal{Q}(\sigma)$ supported on an $m$-simplex $\sigma$, where $m<d$, as a disjoint union of qubits supported on $(d-1)$-simplices colored with a chosen set of  $d$ colors $ C\supset\col{\sigma}$ , and then we immediately obtain
\begin{eqnarray}
|T\cap \mathcal{Q}(\sigma)| &&-|T^c\cap \mathcal{Q}(\sigma)| =\\
\hfill &&\sum_{\substack{\delta\supset\sigma\\ \delta\in\Delta'_{d-1}(\mathcal{L})\\ \col{\delta}= C}} |T \cap \mathcal{Q}(\delta)| - |T^c \cap \mathcal{Q}(\delta)|=0,\ \ \ \ 
\end{eqnarray}
which shows the (Property of $T$) Lemma~\ref{lemmadual}.

\end{proof}
Note that (\ref{eq:sufficientcondition}) in the (Sufficient Condition) Lemma~\ref{lem:sufficientcondition} follows form the (Property of $T$) Lemma~\ref{lemmadual}. To see this, observe first that every stabilizer generator $X(\delta_i)$ is supported on a $x$-simplex $\delta_i$, thus $G_i = \mathcal{Q}(\delta_i)$ and we obtain
\begin{equation}
\bigcap_{i=1}^m \mathcal{Q}(\delta_i)=\emptyset\text{\ \ \ or\ \ \ }\bigcap_{i=1}^m \mathcal{Q}(\delta_i)=\mathcal{Q}(\tau),
\end{equation}
where $\tau$ is a simplex colored with colors $C=\bigcup_{i=1}^m \col{\delta_i}$, such that $\tau\supset \delta_1,\ldots,\delta_m$. The case of an empty intersection is trivial. Since $|\col{\delta_i}|=x+1$, then obviously $| C| \leq  m(x+1) \leq d$, and thus $\tau$ is at most $(d-1)$-simplex. Using the (Property of $T$) Lemma~\ref{lemmadual} we obtain that for any $m =1,...,n$:
\begin{eqnarray}
\left|T\cap \bigcap_{i=1}^m \mathcal{Q}(\delta_i)\right|&&- 
\left|T^c \cap \bigcap_{i=1}^m \mathcal{Q}(\delta_i)\right|=\\
&&|T\cap \mathcal{Q}(\tau)| - |T^c \cap \mathcal{Q}(\tau)|=0,
\end{eqnarray}
which implies (\ref{eq:sufficientcondition}). The (Sufficient Condition) Lemma~\ref{lem:sufficientcondition} implies that $R$ implements the logical $\overline{R}_n$. In particular, one can implement $\overline{R}_d$ using the code $CC_d(0,d-2)$, since $\dim\mathcal{L}=d$, $x=0$ and thus $\left\lfloor\frac{\dim\mathcal{L}}{x+1}\right\rfloor = d$. 

\section{Universal transversal gates with color codes}
\label{sec:universal}

A finite set of gates which is universal can be used to implement any logical unitary, with arbitrary precision. In particular, due to the Solovay-Kitaev~\cite{Kitaev1997, Nielsen2010} theorem, the number of applied gates scales poly-logarithmically with the precision of approximation. Note that the set $\{\overline{H}, \overline{\cnot}, \overline{R}_n \}$ is universal for any integer $n>2$.

In this section, we show how to achieve a universal transversal gate set with color codes by using the technique of \emph{gauge fixing} to switch between different codes. This technique allows one to take advantage of the transversally implementable gates for different color codes. We first illustrate the method with a simple example of two 15-qubit codes~\cite{Paetznick2013,Anderson2014}. Then, we define a partial order between color codes. One can switch between color codes which are comparable to implement a universal gate set in three or higher dimensions. 

\subsection{Switching between codes using gauge fixing}
\label{sec:15qubit}

First, let us define matrices $H_1$ and $H_2$ given by
\begin{eqnarray}
H_1=\left( \begin{array}{ccccccccccccccc}
1 & 1 & 1 & 1 & 1 & 1 & 1 & 1 & 0 & 0 & 0 & 0 & 0 & 0 & 0 \\
1 & 1 & 1 & 1 & 0 & 0 & 0 & 0 & 1 & 1 & 1 & 1 & 0 & 0 & 0 \\
1 & 1 & 0 & 0 & 1 & 1 & 0 & 0 & 1 & 1 & 0 & 0 & 1 & 1 & 0 \\
1 & 0 & 1 & 0 & 1 & 0 & 1 & 0 & 1 & 0 & 1 & 0 & 1 & 0 & 1 \\
\end{array} \right)\!\!,\ \ \\
H_2=\left( \begin{array}{ccccccccccccccc}
1 & 1 & 1 & 1 & 0 & 0 & 0 & 0 & 0 & 0 & 0 & 0 & 0 & 0 & 0 \\
1 & 1 & 0 & 0 & 1 & 1 & 0 & 0 & 0 & 0 & 0 & 0 & 0 & 0 & 0 \\
1 & 0 & 1 & 0 & 1 & 0 & 1 & 0 & 0 & 0 & 0 & 0 & 0 & 0 & 0 \\
1 & 1 & 0 & 0 & 0 & 0 & 0 & 0 & 1 & 1 & 0 & 0 & 0 & 0 & 0 \\
1 & 0 & 1 & 0 & 0 & 0 & 0 & 0 & 1 & 0 & 1 & 0 & 0 & 0 & 0 \\
1 & 0 & 0 & 0 & 1 & 0 & 0 & 0 & 1 & 0 & 0 & 0 & 1 & 0 & 0 \\
\end{array} \right)\!\!.\ \ 
\end{eqnarray}
Moreover, for a binary matrix $M$, we define $M^X$ to be a matrix obtained from $M$ by the following substitutions, $0\mapsto I$ and $1\mapsto X$. Similarly for $M^Z$, we substitute $0\mapsto I$ and $1\mapsto Z$.  Let $\mathscr{C}_A$ be the stabilizer code with the stabilizer group $\mathcal{S}_A$ generated by rows of $H^X_1$, $H^Z_1$ and $H^Z_2$, which we denote by
\begin{equation}
\mathcal{S}_A=\langle H^X_1, H^Z_1 ,H^Z_2 \rangle.
\end{equation}
Let $\mathscr{C}_B$ be the subsystem code with the stabilizer group $\mathcal{S}_B$ and the gauge group $\mathcal{G}_B$ chosen as follows
\begin{equation}
\mathcal{S}_B=\langle H^X_1, H^Z_1 \rangle,\ \ \mathcal{G}_B=\langle H^X_1, H^X_2, H^Z_1 ,H^Z_2 \rangle.
\end{equation}
We can consider both codes $\mathscr{C}_A$ and $\mathscr{C}_B$ to be defined on the same $15$ physical qubits. One can check that $\mathscr{C}_A$ represents the $[[15,1,3]]$ quantum Reed-Muller (stabilizer) code~\cite{MacWilliams1977,SteaneQRM,Anderson2014} and $\mathscr{C}_B$ is a $[[15,1,3]]$ (subsystem) code, which can be thought of as the $[[15,7,3]]$ Hamming code, with six of the seven logical qubits treated as gauge qubits. Note also that $\mathcal{S}_B \subset \mathcal{G}_A =\mathcal{S}_A$ and $\mathcal{G}_B$ has $X$- and $Z$-type generators supported on the same qubits (i.e. $\mathscr{C}_B$ is a self-dual subsystem code). 

Since the $X$-type generators of $\mathcal{G}_B$ coincide with the $X$-type generators of $\mathcal{S}_A$, the codewords of $\mathscr{C}_A$ and $\mathscr{C}_B$ are the same when the latter has a gauge state $|g_Z\rangle$. In other words, codewords $|\bar{0}\rangle$, $|\bar{1}\rangle$ for $\mathscr{C}_A$ are the same as codewords $|\bar{0}\rangle |g_Z\rangle$, $|\bar{1}\rangle |g_Z\rangle$ for $\mathscr{C}_B$, as defined in Eqs.~(\ref{Zgaugebasis1})~and~(\ref{Zgaugebasis2}). On the other hand the codewords $|\bar{0}\rangle |g_X\rangle$, $|\bar{1}\rangle |g_X\rangle$ for $\mathscr{C}_B$ (as defined in Eqs.~(\ref{Xgaugebasis1})~and~(\ref{Xgaugebasis2})), are not valid codewords for $\mathscr{C}_A$.

Now we show that $R_3^{\otimes 15}$ implements $\overline{R}_3$ transversally in $\mathscr{C}_A$. Consider any three of the four $X$-type generators for $\mathcal{G}_A$, and specify their support on subsets of qubits $G_1$, $G_2$, $G_3$, which correspond to rows of $H_1$. One can verify that $|G_a| =8 \equiv 0 \mod 2^3$, $|G_a \cap G_b| =4 \equiv 0 \mod 2^2$, and $|G_a \cap G_b \cap G_c| =2 \equiv 0 \mod 2$, where $\{ a,b,c \}=\{ 1,2,3 \}$. Therefore by the (Sufficient Condition) Lemma~\ref{lem:sufficientcondition}, and by setting $T$ to be an empty set, $T = \emptyset$, we see that $R_3^{\otimes 15}$ implements $\overline{R}_3$ transversally in the code $\mathscr{C}_A$. In contrast for the code $\mathscr{C}_B$, the extra $X$-type generators in $\mathcal{G}_B \setminus \mathcal{G}_A$ do not satisfy these conditions, and thus one cannot show that $\overline{R}_3$ is implemented transversally in $\mathscr{C}_B$.

It is straightforward to verify that $\overline{H}$ is implemented transversally by $H^{\otimes 15}$ in $\mathscr{C}_B$. It swaps $X$ and $Z$ on any physical qubit, and therefore acts on the representative states as $H^{\otimes 15}: |\psi \rangle |g_Z\rangle \mapsto (\overline{H}|\psi \rangle )|g_X\rangle$. Since the state of the gauge qubits has changed, $H^{\otimes 15}$ is a dressed implementation of $\overline{H}$ in $\mathscr{C}_A$. Clearly, $H^{\otimes 15}$ does not implement $\overline{H} $ in $\mathscr{C}_A$, since it takes the state $|\psi \rangle |g_Z\rangle\in\mathscr{C}_A$ to $(\overline{H}|\psi \rangle )|g_X\rangle\not\in\mathscr{C}_A$.

To implement $\overline{H}$ fault-tolerantly in $\mathscr{C}_A$, we use the technique of \emph{gauge fixing}. First, one should apply $H^{\otimes 15}$, resulting in mapping $ |\psi \rangle |g_Z\rangle$ to $(\overline{H}|\psi \rangle )|g_X\rangle$, which is a codeword of $\mathscr{C}_B$, but not of $\mathscr{C}_A$. Then, to switch from code $\mathscr{C}_B$ to $\mathscr{C}_A$, one should sequentially measure each of the six $Z$-type stabilizer generators generated by rows of $H^Z_2$, i.e. those in $\mathcal{S}_A \setminus \mathcal{S}_B$. Note that it is possible to fault-tolerantly measure the stabilizer generators in any stabilizer code \cite{Nielsen2010}. If the measurement reveals that a particular $Z$-type generator is not satisfied, then one should apply an $X$-type Pauli operator which commutes with all generators in $H^Z_2$ and $H^Z_1$, except for the violated stabilizer generator (with which it must anticommute). Such an $X$-type Pauli operator always exists. Following this application, the $Z$-type generator will no longer be violated. Therefore, after this is carried out for all six generators in $H^Z_2$, the state will have changed from $(\overline{H}|\psi \rangle )|g_X\rangle$ to $(\overline{H}|\psi \rangle )|g_Z\rangle$, as required. Specifically, we use the term \emph{gauge fixing} to refer to the process of measuring and setting the gauge qubits to a desired state. 

To recap, in the $[[15,1,3]]$ Reed-Muller code $\mathscr{C}_A$, one can implement $\overline{H}$ fault-tolerantly with the following procedure
\begin{equation}
\ket{\psi}\ket{g_Z} \xmapsto{H^{\otimes 15}} (\overline{H}\ket{\psi})\ket{g_X} \xmapsto{\text{gauge fixing}} (\overline{H}\ket{\psi})\ket{g_Z}.
\end{equation}
In combination with the transversal gates of $\mathscr{C}_A$, this allows one to implement a fault-tolerant universal gate set $\{\overline{H},\overline{\textrm{\cnot}},\overline{R}_3 \}$. We will repeat essentially the same procedure for color codes later.

\subsection{Partial order of color codes}

Given a $d$-dimensional lattice $\mathcal{L}$, $\dim\mathcal{L}=d$, satisfying Conditions~\ref{cond1}~and~\ref{cond2}, we can catalog all color codes defined on $\mathcal{L}$. Namely, a pair of integers $x,z\geq 0$, such that $x+z\leq d-2$, corresponds to a color code, denoted as $CC_{\mathcal{L}} (x,z)$, with $X$- and $Z$-type gauge generators supported on $x$- and $z$-simplices. Note that the $X$- and $Z$-type stabilizer generators of $CC_{\mathcal{L}} (x,z)$ are supported on $(d-2-z)$-simplices and $(d-2-x)$-simplices, respectively. In two dimensions, $d=2$, there is only one color code, $CC_2 (0,0)$ --- a stabilizer code, with both $X$- and $Z$-type stabilizer generators supported on $0$-simplices, whereas in three dimensions, $d=3$, there are three color codes, $CC_3 (1,0)$, $CC_3 (0,1)$ --- stabilizer codes, and $CC_3 (0,0)$ --- a subsystem code. 

One can define a partial order for subsystem color codes defined on the same lattice $\mathcal{L}$ if each codeword of code $\mathscr{C}$ is also a codeword of the other code $\mathscr{C}'$. In particular, we say that $\mathscr{C} \succ \mathscr{C}'$ holds if
\begin{itemize}
\item $\mathscr{C}$ and $\mathscr{C}'$ encode the same number of logical qubits, with identical bare logical Pauli operators,
\item the gauge group $\mathcal{G}$ of $\mathscr{C}$ is contained in the gauge group $\mathcal{G}'$ of $\mathscr{C}'$, $\mathcal{G} \subset \mathcal{G}'$.
\end{itemize}
Note that $\mathcal{G} \subset \mathcal{G}'$ implies $\mathcal{S}' \subset \mathcal{S}$, thus any codeword of $\mathscr{C}$ is also a codeword of $\mathscr{C}'$, and since the bare Pauli operators for the logical qubit are the same in both codes, it actually represents the same logical codeword in both codes. Observe, that the partial order we have just defined can be succinctly expressed as
\begin{equation}
CC_{\mathcal{L}} (x,z)\prec CC_{\mathcal{L}} (x',z') \iff x\leq x' \wedge z\leq z',
\end{equation}
as illustrated in Fig.~\ref{fig:colorcodesfamily}. This follows from the observation that due to the (Disjoint Union) Lemma~\ref{DisjointUnionLemma} the $X$-type gauge generators of $CC_{\mathcal{L}} (x,z)$, which are supported on $x$-simplices, can be expressed as the product of the $X$-type gauge generators of $CC_{\mathcal{L}} (x',z')$ supported on $x'$-simplices, since $x \leq x'$. Similarly for $Z$-type gauge generators. We represent the family of color codes in Fig.~\ref{fig:colorcodesfamily}, and show their partial order using arrows.

\begin{figure}[h!]
\includegraphics[width=0.35\textwidth]{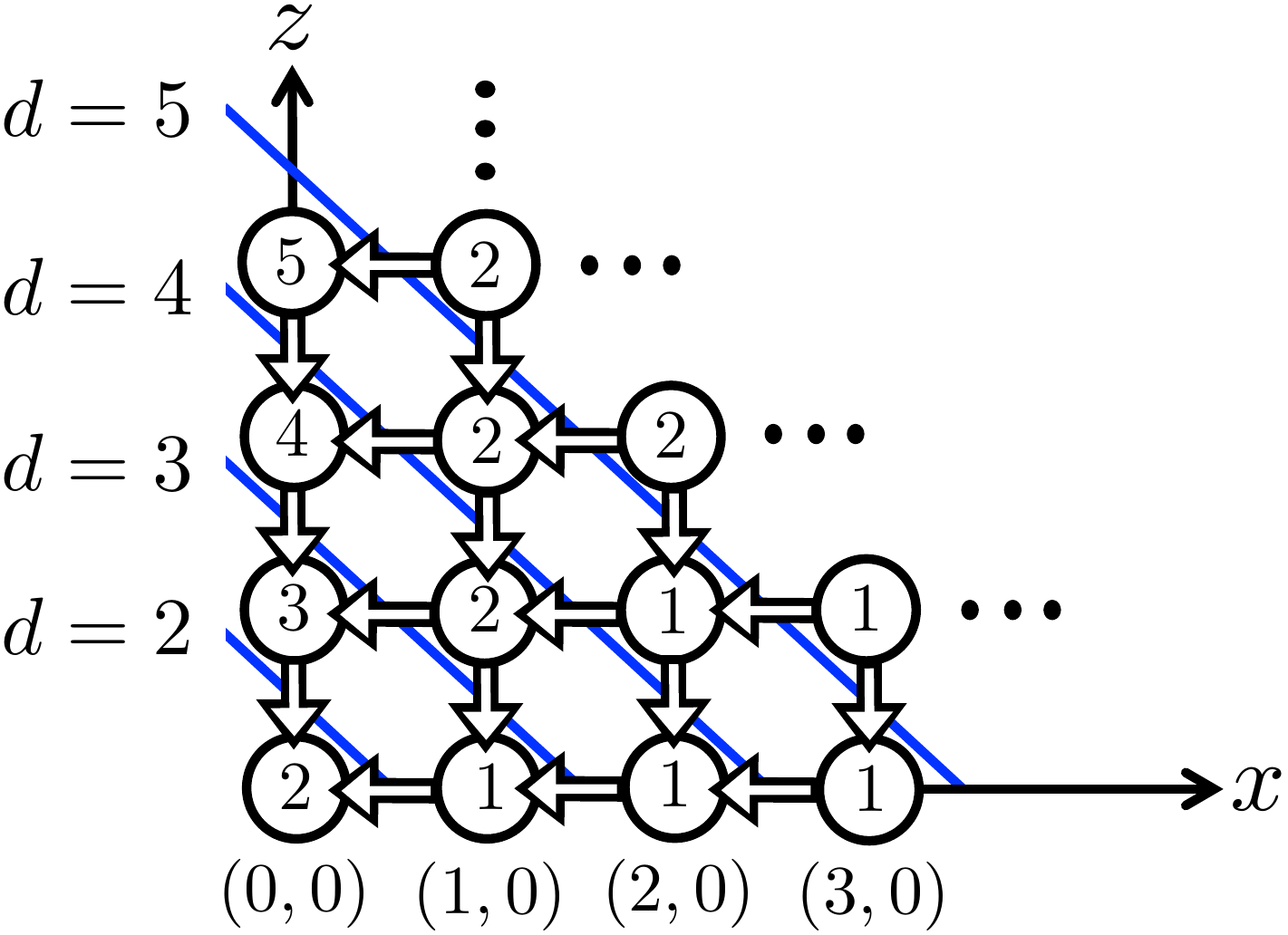}
\caption{Family of color codes. For a given lattice $\mathcal{L}$, only color codes below the $d^{\text{\,th}}$ diagonal line can be realized, where $d=\dim\mathcal{L}$ and the point $(x,z)$ corresponds to the color code $CC_{\mathcal{L}}(x,z)$. This constraint holds, since $x$ and $z$ have to satisfy $x+z\leq d-2$. An arrow from code $\mathscr{C}$ to $\mathscr{C}'$ indicates partial order between them, $\mathscr{C} \succ \mathscr{C}'$. The number placed at $(x,z)$ indicates the maximum gate $\overline{R}_n$ which can be implemented transversally with the stabilizer color code $CC_{d}(x,z)$, with $d=x+z+2$, resulting in $n=\left\lfloor \frac{d}{x+1} \right\rfloor$.}
\label{fig:colorcodesfamily} 
\end{figure}

\subsection{Universal fault-tolerant gate set in color codes}

Here we apply the techniques just discussed to color codes defined on the same lattice $\mathcal{L}$. One can switch back and forth between two codes which are comparable, $CC_{\mathcal{L}} (x,z)\prec CC_{\mathcal{L}} (x',z')$, as follows
\begin{itemize}
\item $CC_{\mathcal{L}} (x,z) \mapsto CC_{\mathcal{L}} (x',z')$: one does nothing, since codewords of $CC_{\mathcal{L}} (x,z)$ are codewords of $CC_{\mathcal{L}} (x',z')$, 
\item $CC_{\mathcal{L}} (x',z') \mapsto CC_{\mathcal{L}} (x,z)$: one can view the codewords of $CC_{\mathcal{L}} (x,z)$ as those for $CC_{\mathcal{L}} (x',z')$ with the additional gauge qubits present in $CC_{\mathcal{L}} (x,z)$ set to a particular state. To switch, one fixes the state of the additional gauge qubits to the appropriate state.
\end{itemize}

Given a three-dimensional lattice $\mathcal{L}$, $\dim\mathcal{L}=3$, one can implement a universal gate set starting with a code $CC_{\mathcal{L}}(0,1)$. As explained earlier, one can transversally perform the logical $\overline{\cnot}$ and $\overline{R_3}$ on that code. To form a universal gate set, it suffices to also implement logical $\overline{H}$. This gate cannot be implemented transversally in $CC_{\mathcal{L}}(0,1)$, but can be achieved in $CC_{\mathcal{L}}(0,0)$. Note that $CC_3(0,0) \prec CC_3(0,1)$, therefore any codeword in $CC_3(0,1)$ is a valid codeword in $CC_3(0,0)$. In particular, we can think of $\ket{\psi}\in CC_3(0,1)$ as $\ket{\psi}\ket{g}\in CC_3(0,0)$, where $\ket{g}$ is a state of the gauge qubits of $CC_3(0,0)$. By applying $H(Q)$ we perform the logical $\overline{H}$ on the logical qubits of $CC_3(0,0)$, which also changes the state of the gauge qubits, namely
\begin{equation}
H(Q)\left(\ket{\psi}\ket{g}\right)=\left(\overline{H}\ket{\psi}\right)\ket{g'}.
\end{equation}
Note that the resulting codeword $\left(\overline{H}\ket{\psi}\right)\ket{g'}\in CC_3(0,0)$ is not a valid codeword of $CC_3(0,1)$, since the gauge qubits are in the state $\ket{g'}\neq \ket{g}$. To return to $CC_3 (0,1)$, one needs to \emph{fix the gauge qubits} to the correct state, namely $\ket{g'}\mapsto \ket{g}$, and we obtain a codeword $\overline{H}\ket{\psi}\ket{g}\in CC_3(0,1)$. Since $CC_3(0,1)$ is a stabilizer code, it is possible to measure and correct the violated stabilizers in a fault-tolerant way, just as in~Section~\ref{sec:15qubit}. Therefore, to fix the gauge, one should first measure all $Z$-type stabilizer generators supported on 1-simplices, and then apply the appropriate $X$-type Pauli operators in order to correct any violated stabilizer generators. After this, assuming no errors have occurred, all the stabilizer generators for $CC_3(0,1)$ are satisfied.

To summarize, we can perform the logical $\overline{H}$ on $CC_3(0,1)$ by first applying $H(Q)$ and subsequently fixing the gauge to return to the codespace of $CC_3(0,1)$,
\begin{equation}
\ket{\psi}\ket{g'}  \xmapsto{H(Q)}  \left(\overline{H}\ket{\psi}\right)\ket{g'} \xmapsto{\textrm{  gauge fixing  }} \left(\overline{H}\ket{\psi}\right)\ket{g}.
\end{equation}

Since $\overline{\textrm{\cnot}}$ and $\overline{R}_3$ can be performed transversally in $CC_3(0,1)$, one can fault-tolerantly implement a universal gate-set  $\{\overline{H},\overline{\textrm{\cnot}},\overline{R}_3 \}$ in $CC_3(0,1)$. This procedure can be directly generalized to fault-tolerantly implement the universal gate set $\{\overline{H},\overline{\textrm{\cnot}},\overline{R}_d \}$ with the code $CC_d(0,d-2)$ in  $d$ dimensions.

\section{Acknowledgements}

We would like to thank H\'{e}ctor Bomb\'{i}n for introducing us to color codes and taking the time to explain his results. We would like to thank Jeongwan Haah, Beni Yoshida, Olivier Landon-Cardinal, Gorjan Alagic and John Preskill for helpful comments on the manuscript. We thank Fernando Pastawski for pointing out bipartition as a possible construction of the set $T$. We acknowledge funding provided by the Institute for Quantum Information and Matter, an NSF Physics Frontiers Center with support of the Gordon and Betty Moore Foundation (Grants No. PHY-0803371 and PHY-1125565).

%

\appendix*
\section{Examples of color codes}

\subsection{Construction of a lattice in  $d$ dimensions}
\label{LatticeConstruction}

A recipe to obtain a lattice $\mathcal{L}$ satisfying the Conditions~\ref{cond1}~and~\ref{cond2} required to define color codes in  $d$ dimensions is as follows (see Fig.~\ref{fig:construction2d} for an example in $d=2$).

\begin{figure}[h!]
\includegraphics[width=0.41\textwidth]{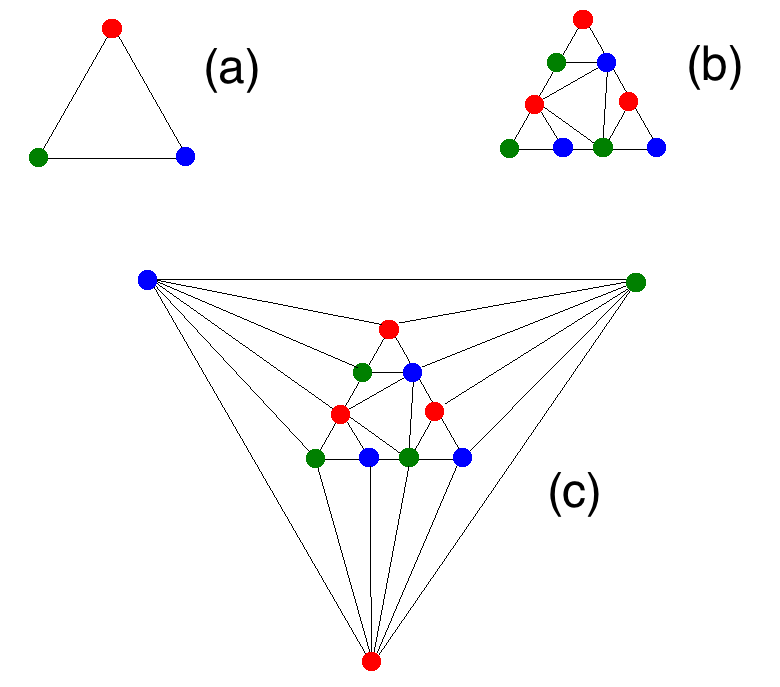}
\caption{Construction of a color code in 2D with spatially local (stabilizer) generators. (a) Take a $2$-simplex $\delta$, with vertices colored in red, green and blue. (b) Divide $\delta$ into ``smaller'' simplices with matching colors. This is a $3$-colorable homogeneous simplicial $2$-complex $\mathcal{K}$. (c) Place $\mathcal{K}$ inside a $2$-simplex $\tau$ and attach $2$-simplices between $\tau$ and $\mathcal{K}$. The resulting homogeneous simplicial $2$-complex $\mathcal{L}$ is $3$-colorable, and thus we can define a color code on the lattice $\mathcal{L}$.}
\label{fig:construction2d} 
\end{figure} 

\begin{enumerate}
\item Start with a  $d$-simplex $\delta$, with vertices which are colored with $d+1$ colors $\mathbb{Z}_{d+1}$.
\item Construct a homogeneous simplicial  $d$-complex $\mathcal{K}$ from $\delta$ by dividing $k$-faces of $\delta$ into $k$-simplices. We also require that the coloring is preserved, i.e. every $k$-face $\sigma\subset\delta$ colored with $ C=\col{\sigma}$ is divided into $k$-simplices colored with $ C$ and the whole complex $\mathcal{K}$ is $(d+1)$-colorable.
\item Place the  $d$-complex $\mathcal{K}$ inside a  $d$-simplex $\tau$ colored with $\mathbb{Z}_{d+1}$.
\item For every $k$-face $\rho\subsetneq\tau$ and for every $(d-k-1)$-simplex $\omega\subset \mathcal{K}$ obtained from a $(d-k-1)$-face $\sigma\subset\delta$ with complementary colors, $\col{\omega}=\mathbb{Z}_{d+1}\setminus \col{\rho}$, attach a  $d$-simplex spanned by $\rho$ and $\omega$.
\item Choose $\mathcal{L}$ to be the collection of all  $d$-simplices added in Step 4, together with simplices belonging to $\mathcal{K}$ and $\tau$. This can be used to define a color code on the lattice $\mathcal{L}$ as specified in Section~\ref{sec:Ddim}.   
\end{enumerate}

Note that in the above recipe, step 2 is not fully specified. Any homogeneous simplicial $d$-complex $\mathcal{K}$ obtained from a $d$-simplex $\delta$ will work, as long as $\mathcal{K}$ is $(d+1)$-colorable. Such lattices always exist --- below we give an explicit example of a family of lattices in any dimension $d\geq 2$. Following steps 3-5, we always obtain a lattice on which we can define a color code in $d$ dimensions. 

There is a systematic construction of a family of (fractal) color codes in $d$ dimensions, for which there is an explicit recipe for $\mathcal{K}$. The resulting codes neither have spatially local generators nor have macroscopic distance, and do not result in color codes, which are topological stabilizer codes. The prescription is as follows.
\begin{enumerate}
\item The first member is defined on the lattice $\mathcal{L}_1$, obtained from the recipe by setting $\mathcal{K}$ to be a $d$-simplex.
\item The $i+1$ member of the family is defined on the lattice $\mathcal{L}_{i+1}$, obtained from the recipe by setting $\mathcal{K}=\mathcal{L}_{i}$.
\end{enumerate}
The first three members of the family of the two-dimensional (fractal) color codes are illustrated in Fig.~\ref{fig:fractalcode}.

In Ref.~\cite{Bombin2013}, a systematic construction in two and three dimensions for families of color codes with spatially local generators is presented. In two dimensions, $\mathcal{K}$ is chosen to be a part of triangular lattice (as in Fig.~\ref{fig:construction2d}), whereas in three dimensions $\mathcal{K}$ is a part of the BCC lattice. Bomb\'{i}n's constructions result in topological color codes.

\begin{figure}[h!]
\includegraphics[width=0.45\textwidth]{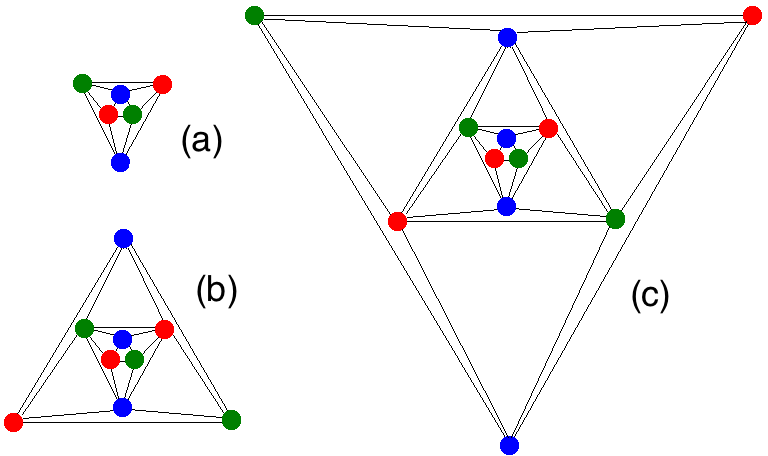}
\caption{The family of (fractal) color codes in two dimensions. The first three members of the family --- two-dimensional color codes encoding one logical qubits using (a) 7, (b) 13 and (c) 19 physical qubits.}
\label{fig:fractalcode} 
\end{figure}

\subsection{Quantum Reed-Muller codes as color codes}

There exists a family of codes known as the quantum Reed-Muller codes~\cite{MacWilliams1977,SteaneQRM,Anderson2014}. Here we are concerned with the subfamily of quantum Reed-Muller codes with members labeled uniquely by an integer $m\geq 3$ with parameters $[[2^m-1,1,3]]$, i.e. encoding one logical qubit into $2^m-1$ physical qubits, with a distance of three. We denote by $QRM(m)$ the $m^{\text{th}}$ member of this subfamily. These codes are defined in terms of matrices $M_i$ satisfying the recursion relations
\begin{equation}
M_1 = (1),\ M_{i+1}=
\left(\begin{array}{ccc}
M_i & 0& M_i \\
0\ldots 0  & 1& 1\ldots 1 \\
\end{array}\right).
\end{equation}
Note that the set of columns of $M_m$ is the set of all non-zero binary vectors of length $m$. By $M^{\perp}_m$ we denote a matrix dual to $M_m$, namely a matrix with rows being a basis of the kernel of $M_m$. Clearly, $M_m (M^{\perp}_m)^T=0$. We can define $QRM(m)$ as the stabilizer code with the stabilizer group $\mathcal{S}_m$ generated by rows of $M_m$ and $M^{\perp}_m$ with $0$'s and $1$'s replaced by $I$'s and $X$'s or $Z$'s , namely
\begin{equation}
\mathcal{S}_m=\langle M^X_m, (M^{\perp}_m)^Z \rangle.
\end{equation}
\vspace*{1pt}
We now show that $QRM(m)$ is the same as the stabilizer color code $CC_{m-1}(0,m-3)$ obtained from the construction described in Appendix~\ref{LatticeConstruction} by taking the simplicial complex $\mathcal{K}$ to be a $(m-1)$-simplex $\delta$, $\mathcal{K}=\delta$. In other words, $QRM(m)$ is equal to the first member of the (fractal) color code family in $m-1$ dimensions (see Fig.~\ref{fig:fractalcode} (a) for $m=3$ case). In particular, $QRM(3)$ is Steane's $7$-qubit code and $QRM(4)$ is the $15$-qubit Reed-Muller code. 

\begin{figure}[h!]
\includegraphics[width=0.45\textwidth]{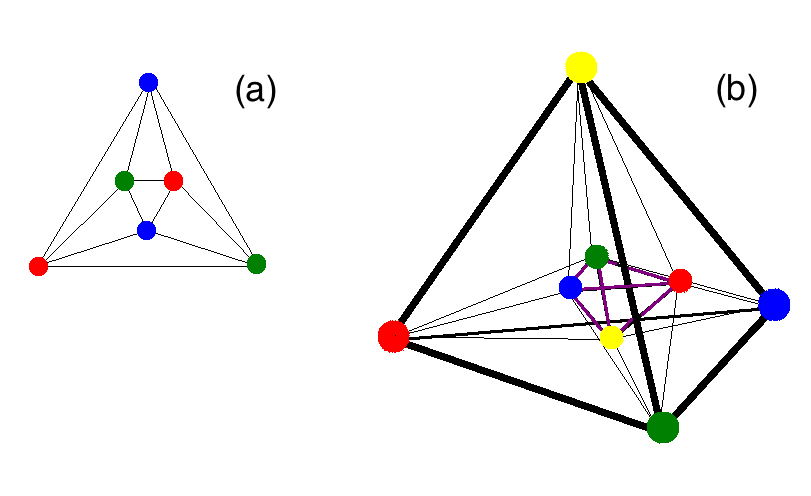}
\caption{Quantum Reed-Muller code $QRM(m)$ as a special case of a (stabilizer) color code $CC_{m-1}(0,m-3)$ for (a) $m=3$ --- Steane's $7$-qubit code, and (b) $m=4$ --- the $15$-qubit Reed-Muller code. Steane's code with all the possible transversal gates have recently been implemented experimentally~\cite{Nigg2014}.
}
\label{fig:reedmuller} 
\end{figure}

To prove this equivalence, it is sufficient to show that there is a one-to-one identification of physical qubits of $QRM(m)$ with those of $CC_{m-1}(0,m-3)$ such that the logical Pauli operators $\overline{X}$ and $\overline{Z}$ are identical, and that the $X$-type stabilizer generators are identical. Note that this completely specifies the stabilizer group $\mathcal{S}$, since the $Z$-type generator matrix is a dual to the $X$-type generator matrix. In particular, we show that the $X$-type generator matrix $M'_m$ for $CC_{m-1}(0,m-3)$ is the same as $M_m$ for $QRM(m)$ up to a permutation of columns.

Using the construction described in Appendix~\ref{LatticeConstruction}, and taking the simplicial complex $\mathcal{K}=\delta$, where $\delta$ is a $(m-1)$-simplex, results in a lattice $\mathcal{L}$, with $\dim\mathcal{L} = m-1$. The total number of  $(m-1)$-simplices in $\mathcal{L}$ is $2^{m}-1$. This is because we attach $(m-1)$-simplices between every $(k-1)$-face $\rho\subset\tau$, for every $1\leq k\leq m-1$, and the $(m-k-1)$-face $\sigma\subset\delta$ colored with the complementary colors, $\col{\sigma}=\mathbb{Z}_{d+1}\setminus\col{\rho}$. We can pick a subset of $k$ vertices of $\tau$ in $m \choose k$ different ways and thus the number of newly attached $(m-1)$-simplices is ${m\choose 1}+{m\choose 2}+\ldots +{m\choose m-1}=2^{m}-2$. Therefore, including a qubit placed at $\delta$, there are exactly $2^m-1$ physical qubits in $CC_{\mathcal{L}}(0,m-3)$. On the other hand, there are exactly $m$ vertices in $\mathcal{L}\setminus\partial\mathcal{L}$, and thus there are $m$ $X$-type stabilizer generators in $CC_{\mathcal{L}}(0,m-3)$. The weight of a column in $M'_m$, corresponding to a qubit supported on a $(m-1)$-simplex $\pi$, is given by the number of $X$-type stabilizer generators supported on that qubit, i.e. the number of vertices belonging to $\pi$ but not to $\partial\mathcal{L}$. There are exactly $m \choose m-k$ $(m-1)$-simplices containing $k$ vertices not belonging to $\partial\mathcal{L}$ and each of them contains different  set of $k$ vertices. Thus, there are ${m\choose k}$ different columns of weight $k$ in $M'_m$ and the only way this can occur is if the columns of $M'_m$ are the set of all non-zero binary vectors of length $m$. Thus, up to a relabeling of physical qubits, $M'_m$ and $M_m$ are identical. Also note that the logical operators of both codes are $\overline{X} = X(Q)$ and $\overline{Z} = Z(Q)$. Therefore the codes are the same.
\vfill
\eject

\end{document}